\documentclass[final,5p,times]{elsarticle}

\usepackage{amsmath,amssymb,mathtools}
\usepackage{amsthm}
\usepackage{microtype}
\usepackage{tikz}
\usetikzlibrary{automata,positioning,arrows.meta,decorations.pathreplacing}
\usepackage{placeins}
\usepackage[hidelinks]{hyperref}
\hypersetup{pdftitle={Further Remarks on Separating Words}}

\newtheorem{theorem}{Theorem}
\newtheorem{proposition}[theorem]{Proposition}
\newtheorem{lemma}[theorem]{Lemma}

\newcommand{\sepw}{\operatorname{sep}}
\newcommand{\nsep}{\operatorname{nsep}}
\newcommand{\rev}[1]{#1^{R}}
\newcommand{\shift}[1]{\rho_{#1}}

\begin{document}

 \begin{frontmatter}

 \title{Further Remarks on Separating Words}

 \author{John W. Nicol\corref{cor1}}
 \ead{jnicol1994@alumni.cmu.edu}
 \cortext[cor1]{Corresponding author}
 
 \begin{abstract}
 We revisit questions on separating words raised by Demaine, Eisenstat,
 Shallit, and Wilson, together with Ebrahimnejad's follow-up to their reversal
 problem.
 For length-$n$ pairs whose difference word has $d$ runs, we prove an
 $O(d\log n)$ bound, extending the Hamming-distance theorem of Demaine et al.
 For conjugate words, we give bounds controlled by the arithmetic of the shift.
 We resolve Demaine et al.'s Open Problem~2 by showing that the order of two
 words can change nondeterministic separation by an unbounded factor.
 Our reversal construction addresses Ebrahimnejad's follow-up to Open Problem~1:
 forward and reversed deterministic separation can differ by an unbounded
 factor.
 Since nondeterministic separation is invariant under reversal, the same
 construction also improves the lower bound in Open Problem~3.
 \end{abstract}

 \end{frontmatter}

 \section{Introduction}

 A deterministic finite automaton (DFA) separates two distinct words if it accepts one and rejects the other.

 The problem was introduced by Goral\v{c}\'{i}k and Koubek~\cite{GoralcikKoubek1986}.
 Robson obtained the long-standing $O(n^{2/5}(\log n)^{3/5})$-state upper bound and also studied separation by automata whose transition monoids are groups~\cite{Robson1989,Robson1996}.
 A 2011 paper of Demaine, Eisenstat, Shallit, and Wilson surveyed the subject and posed questions for several restrictions and variants~\cite{DemaineEtAl2011}.
 Subsequent lower-bound work developed the connection with identities in finite transformation semigroups and symmetric groups~\cite{BulatovEtAl2017}.
 More recently, Chase~\cite{Chase2021} improved the upper bound to $O(n^{1/3}(\log n)^7)$ states.

 We organize this paper around the questions in~\cite{DemaineEtAl2011}.
 Sections~2 and~3 revisit its results and questions for words with additional
 structure: runs of differences and cyclic shifts.
 Sections~4 and~5 address its three open problems, together with
 Ebrahimnejad's follow-up question to Open Problem~1~\cite{Ebrahimnejad2018}.

 Theorem 2 of~\cite{DemaineEtAl2011} shows that $h$ differing
 positions (a Hamming distance of $h$) can be detected with $O(h\log n)$ states.
 We generalize this to detecting $d$ runs of differences in $O(d\log n)$ states,
 regardless of how many positions differ.
 This also addresses a remark in~\cite{Chase2021}.

 Section~5.2 of~\cite{DemaineEtAl2011} asks whether conjugate words, which are
 related by a cyclic shift, are easier to separate.
 We show that a shift with large greatest common divisor contains an arbitrary shorter
 separating-words instance, while a shift with small greatest common divisor can be handled by a small
 fingerprint.

 Open Problem~2 asks whether swapping the accepted and rejected words can
 change nondeterministic separation by an unbounded amount.
 We show that both the difference and the ratio can be arbitrarily large, even
 for equal-length binary words that differ in only two positions.

 Open Problem~1 asks whether reversal can change deterministic separation by
 an unbounded amount.
 Ebrahimnejad proved that the difference is unbounded and asked whether there
 is a good upper bound on the corresponding ratio~\cite{Ebrahimnejad2018}.
 We rule out a constant upper bound by constructing pairs for which forward
 separation needs more than $N$ states while reversed separation needs only
 $O(\log N)$ states.

 Open Problem~3 asks for better bounds on the ratio between deterministic and
 nondeterministic separation as a function of the common word length.
 Since nondeterministic separation is unchanged by reversal, the same
 construction improves its lower bound from
 $\Omega(\sqrt{\log L})$ to
 $\Omega(\log L/(\log\log L)^2)$ for every sufficiently large $L$.

 We state the reversal theorem in Section~4.2, where it is used for
 nondeterministic separation, and defer its proof to the final section.
 We write $\sepw(w,x)$ for the number of states in the smallest DFA separating words $w$ and $x$\@.
 Unless stated otherwise, all words are binary and all DFAs are complete.
 All logarithms are to base $2$.

 \section{Hamming distance and runs of differences}

 For equal-length binary words $w,x\in\{0,1\}^n$, let
 $\Delta=w\oplus x$ be their difference word.
 Here $\Delta_i=1$ exactly where $w_i\ne x_i$.
 A \emph{run of differences} is a maximal run of $1$'s in $\Delta$.
 Put $\Delta_0=\Delta_{n+1}=0$, and call $i\in\{1,\ldots,n+1\}$ a
 \emph{boundary} if $\Delta_{i-1}\ne\Delta_i$.
 A difference word with $d$ runs has exactly $2d$ boundaries.

 The proof applies the Hamming-distance argument from Theorem~2 of~\cite{DemaineEtAl2011} to the boundaries instead of all differing positions.

 \begin{theorem}
 \label{thm:few-runs}
 If $w\ne x$ and their difference word has $d$ runs, then
 \[
 \sepw(w,x)=O(d\log n).
 \]
 \end{theorem}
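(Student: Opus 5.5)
The plan is to reduce to the Hamming-distance argument of Theorem~2 of~\cite{DemaineEtAl2011}, applied to a \emph{derivative} of the words rather than to the words themselves. For a word $y\in\{0,1\}^n$, set $y_0=y_{n+1}=0$ and define $y'_i=y_{i-1}\oplus y_i$ for $i\in\{1,\ldots,n+1\}$. Then
\[
w'_i\oplus x'_i=(w_{i-1}\oplus x_{i-1})\oplus(w_i\oplus x_i)=\Delta_{i-1}\oplus\Delta_i .
\]
So $w'$ and $x'$ differ exactly at the boundaries of $\Delta$, and they have Hamming distance exactly $2d$. Since $w\ne x$, we have $d\ge 1$.

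\textbf{Choice of modulus.} Let $b_1<b_2<\cdots<b_{2d}$ be the boundaries, all lying in $\{1,\ldots,n+1\}$. The product $P=\prod_{k\ge2}(b_k-b_1)$ is a positive integer less than $(n+1)^{2d}$. The product of all primes up to $m$ is $e^{(1+o(1))m}$, so some prime $p=O(d\log n)$ does not divide $P$. For this $p$, $b_1$ is the unique boundary in its residue class $r\equiv b_1 \pmod p$.

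\textbf{Counting argument.} Consider $S(y)=\bigoplus_{i\equiv r \pmod p,\ 1\le i\le n+1} y'_i$. Then
\[
S(w)\oplus S(x)=\bigoplus_{i\equiv r}(\Delta_{i-1}\oplus\Delta_i).
\]
This sum has exactly one nonzero term, namely the one at $i=b_1$, so $S(w)\ne S(x)$.

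\textbf{Automaton.} A DFA computes $S$ while reading $y$ using states that record three things: the current position modulo $p$, the previous bit $y_{i-1}$ (initially $0$), and the running parity. On reading $y_i$, it updates the parity by $y_{i-1}\oplus y_i$ if $i\equiv r$, then stores $y_i$ and advances the position counter. The final term $y'_{n+1}=y_n\oplus0$ is handled by the choice of accepting states. A state is accepting when its parity, adjusted by the stored last bit whenever the stored position is $\equiv r$ (that is, $n+1\equiv r$), equals $S(w)$. This is legitimate because $n$ is fixed and the stored position determines whether the extra term applies. The automaton has $4p=O(d\log n)$ states and separates $w$ from $x$.

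\textbf{Expected obstacle.} The only delicate points are the boundary at position $n+1$, which lies outside the word and is handled by the acceptance condition above, and the prime estimate. For the latter, I would cite the standard Chebyshev-type bound $\prod_{q\le m}q\ge 2^{cm}$ already used in~\cite{DemaineEtAl2011}.
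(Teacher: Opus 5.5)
Your proof is correct and takes essentially the same route as the paper: choose a prime $p=O(d\log n)$ that isolates one boundary in its residue class, then separate with the parity of the boundary indicator over that class. The only difference is in the automaton. The paper writes your $S$ as $a_{p,r-1}\oplus a_{p,r}$ and uses whichever single fingerprint $a_{p,c}$ differs, which gives a plain $2p$-state two-ring DFA with no stored previous bit and no special handling of position $n+1$. Your $4p$-state automaton instead computes the combined parity directly.
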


 \begin{proof}
 Let $B$ be the set of boundaries, choose $b\in B$, and put
 \[
 M=\prod_{i\in B\setminus\{b\}}|i-b|\le(n+1)^{2d-1}.
 \]
 By the prime number theorem, the product of the primes up to $C d\log n$ exceeds $M$ for a suitable constant $C$.
 Thus some prime $p=O(d\log n)$ does not divide $M$, making $b$ the only boundary in its residue class modulo $p$.

 For a binary word $u$ and a residue $c$ modulo $p$, define the parity fingerprint
 \[
 a_{p,c}(u)=\sum_{j\equiv c\pmod p}u_j\pmod 2.
 \]
 Each $1$ at position $j$ in $\Delta$ contributes to the two possible boundaries $j$ and $j+1$; adjacent contributions cancel modulo $2$.
 The parity of the boundaries congruent to $b$ is
 \[
 a_{p,b-1}(\Delta)\oplus a_{p,b}(\Delta).
 \]
 This parity is $1$, so $a_{p,c}(\Delta)=1$ for $c=b-1$ or $c=b$.
 Since $\Delta=w\oplus x$, $a_{p,c}(w)\ne a_{p,c}(x)$.
 As in that argument, a DFA with two rings of $p$ states computes $a_{p,c}$: it records the input position modulo $p$ and switches rings whenever it reads a $1$ in residue class $c$.
 Choosing the appropriate ring as accepting separates $w$ and $x$ with $2p=O(d\log n)$ states.
 \end{proof}

 The Hamming-distance result follows: if $w$ and $x$ differ in $h$ positions, then their difference word has at most $h$ runs, so $\sepw(w,x)=O(h\log n)$.

 Chase~\cite[Section~2]{Chase2021} mentioned the Thue--Morse word and its complement, each with padding, as self-similar examples that were not then known how to handle for
 separating words, trace reconstruction, or reconstruction from subsequences.
 Since those words differ by one complemented middle interval, $d=1$, and Theorem~\ref{thm:few-runs} gives an $O(\log n)$-state separator.

 \section{Conjugate words}

 Two words are conjugates if one is a cyclic shift of the other.
 For a length-$n$ word $W=W_0W_1\cdots W_{n-1}$, let $\shift{k}(W)$ be its left shift by $k$ positions:
 \[
 \shift{k}(W)_i=W_{(i+k)\bmod n}.
 \]
 For $n\ge1$, put
 \[
 S(n)=\max_{w\ne x,\ |w|=|x|=n}\sepw(w,x).
 \]
 This is the exact-length convention; some earlier work instead maximizes over pairs whose lengths are at most $n$.
 For $0<k<n$, define
 \[
 R(n,k)=
 \max_{\substack{|W|=n\\W\ne\shift{k}(W)}}
 \sepw(W,\shift{k}(W)).
 \]
 Put $g=\gcd(n,k)$.

 \begin{theorem}
 \label{thm:conjugates}
 For every $0<k<n$,
 \[
 \begin{aligned}
 S(g)&\le R(n,k),\\
 R(n,k)&=O\bigl(\min\{g\log^2(n/g),k,n-k\}\bigr).
 \end{aligned}
 \]
 \end{theorem}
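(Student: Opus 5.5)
Three things have to be shown: the lower bound $S(g)\le R(n,k)$ and three separate upper bounds, namely $O(k)$, $O(n-k)$, and $O(g\log^2(n/g))$. For the lower bound, given $u\ne v$ with $|u|=|v|=g$ and $\sepw(u,v)=S(g)$, I will build a word $W$ of length $n$ with $\shift{k}(W)$ closely related to the pair $(u,v)$. Write $n=gm$, $k=gr$ with $\gcd(m,r)=1$, and view $W$ as $m$ consecutive blocks of length $g$; then $\shift{k}$ cyclically permutes the blocks by $r$. The aim is to place $u$ and $v$ in blocks so that $W$ and $\shift{k}(W)$ differ only by swapping a $u$-block and a $v$-block, with a common prefix $P$ and a common suffix $Q$. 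The natural choice is $m-1$ blocks equal to $u$ and one block equal to $v$. The rotation moves the $v$-block from index $j$ to index $j-r \bmod m$, so
\[
W=P\,u\,Y\,v\,Q,\qquad \shift{k}(W)=P\,v\,Y\,u\,Q
\]
for suitable words $P,Y,Q$ built from copies of $u$. Since $\shift{k}$ is a bijection, $W\ne\shift{k}(W)$ holds automatically.

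The reduction must give a DFA for $(u,v)$ from any DFA $A$ separating $W$ and $\shift{k}(W)$, with no more states than $A$. Setting the start state to $\delta(q_0,P)$ and handling $Q$ by redefining the accepting set as those $q$ with $\delta(q,Q)$ accepting absorbs $P$ and $Q$. The middle is harder, because $A$ must distinguish $uYv$ from $vYu$, which is not directly the pair $(u,v)$. My fix is to put $v$ in the last block, so that the rotated pattern has the differing blocks at the very ends, and to choose $r$ (equivalently $n-k$) so that one difference lies at the start. The cleanest version I would try is the following. If $m=2$, then $W=uv$ and $\shift{k}(W)=vu$, which is a conjugate-type pair rather than the original one. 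In that case I would argue through a chain: if neither $u$ nor $v$ can be separated from the other starting from any state, then neither can $uv$ from $vu$, since $\delta(q,uv)=\delta(\delta(q,u),v)$. In general, suppose $\delta(q,u)=\delta(q,v)$ for every state $q$ of $A$. Then $A$ cannot distinguish any two words obtained from each other by replacing $u$-blocks with $v$-blocks, and in particular cannot distinguish $W$ from $\shift{k}(W)$, since both consist of $m-1$ copies of $u$ and one copy of $v$. Hence some state $q$ has $\delta(q,u)\ne\delta(q,v)$, and taking $A$ started at $q$, with accepting set $\{\delta(q,u)\}$, separates $u$ from $v$ using $|A|$ states. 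This one observation gives the lower bound for every $m$ and sidesteps the placement problem entirely.

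For the upper bounds, the $O(k)$ bound uses the fact that if $W\ne\shift{k}(W)$, then $W$ is not $k$-periodic in the cyclic sense. So some $i$ has $W_i\ne W_{(i+k)\bmod n}$, and the prefixes or suffixes of length $k$ of $W$ and $\shift{k}(W)$ relate to each other. A DFA that remembers the last-$k$-window phase could exploit this, but it is simpler to note that $\shift{k}(W)$ has prefix $W_k\cdots W_{2k-1}$. Either the first $k$ symbols differ, which a DFA of $O(k)$ states reading the prefix can detect, or $W_{0..k-1}=W_{k..2k-1}$, and then I would iterate. The cleaner tool is counting positions modulo $k$ with parity fingerprints, as in Theorem~\ref{thm:few-runs}, and the same argument handles $n-k$ by symmetry, since $\shift{n-k}$ is the inverse shift. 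For the $g\log^2(n/g)$ bound, fingerprint modulo a small prime $p$ times $g$: positions modulo $g$ are preserved by the shift, and within a residue class the shift acts as rotation by $r$ on $\mathbb{Z}_m$. Finding a prime $p=O(\log m)$ and a class whose count changes, then multiplying by the counting cost $O(\log m)$, gives $O(g\log^2 m)$. I expect this fingerprint-selection step to be the main obstacle.
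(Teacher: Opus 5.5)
Your lower bound is correct and is essentially the paper's argument. If $u$ and $v$ induced the same transformation on every state of $A$, any concatenation of $m$ blocks from $\{u,v\}$ would act like $u^m$. So some state $q$ has $\delta(q,u)\ne\delta(q,v)$, and restarting $A$ at $q$ separates $u$ from $v$. The paper uses $W=x^ry^s$, but a single $v$-block works equally well. One correction: $W\ne\shift{k}(W)$ holds because the $v$-block moves from index $j$ to $j-r\not\equiv j\pmod m$, not because $\shift{k}$ is a bijection.

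The upper bounds, however, are not proved. For $O(k)$, the tool you call cleaner fails. When $k\mid n$, the cyclic shift maps each residue class modulo $k$ onto itself, so $W$ and $\shift{k}(W)$ have identical class counts (e.g.\ $xy$ versus $yx$ with $|x|=|y|=k$). No fingerprint modulo $k$ can separate them. Your other idea, following the $k$-periodic extension $U$ of the first $k$ symbols, is where the paper starts, but you stop just before the step that needs an idea. If $W$ first deviates from $U$ at position $L$, then $\shift{k}(W)$ first deviates at $L-k$. These positions are congruent modulo $k$, so a $k$-state phase counter alone cannot tell them apart, and ``iterate'' does not say how an $O(k)$-state DFA would. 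The paper switches at the first deviation to a $(k+1)$-cycle counting the remaining length modulo $k+1$. The remaining lengths differ by $k\not\equiv0\pmod{k+1}$, and if only $\shift{k}(W)$ deviates, the two words end on different cycles; this gives $2k+1$ states.

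For $O(g\log^2(n/g))$ you set up the right frame but, as you acknowledge, leave open the step that carries the proof. The missing idea is a first-moment invariant. Take a nonconstant residue-class word of length $m$, and let $a$ and $b$ be the numbers of $1$'s that do not and do wrap under the rotation. The sum of the positions of the $1$'s changes by $\Delta=(m-r)b-ra$. Since $\gcd(r,m-r)=1$, $\Delta=0$ would force $(m-r)\mid a$ and $r\mid b$; as $a\le m-r$ and $b\le r$, the word would then be constant. Hence $0<|\Delta|<m^2$, so some prime $p=O(\log m)$ does not divide $\Delta$. The position sum modulo $p$ equals $\sum_c cN_c$, where $N_c$ is the number of $1$'s in class $c$ modulo $p$, so some $N_c$ changes modulo $p$. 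Counting that class modulo $p$ while tracking position modulo $gp$ gives $gp^2=O(g\log^2(n/g))$ states.
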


 The lower bound embeds the ordinary separating-words problem at length $g$,
 while the upper bounds give $O(\log^2 n)$ states when $g=1$ and $O(1)$ states when $k$ or $n-k$ is fixed.

 Combining the three propositions below proves the theorem.

 \begin{proposition}
 \label{prop:conjugate-lower}
 $S(g)\le R(n,k)$.
 \end{proposition}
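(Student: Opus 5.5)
We must show $S(g)\le R(n,k)$. The plan is to take any pair $u\ne v$ of length $g$ and build a length-$n$ word $W$ with $\shift{k}(W)\ne W$ such that any DFA separating $W$ from $\shift{k}(W)$ yields, with no more states, a DFA separating $u$ from $v$. The natural construction is interleaving on a scale where the shift acts cleanly. Write $n=gm$ and $k=gq$ with $\gcd(m,q)=1$. The cyclic group generated by a shift of $k$ positions acts on $\mathbb{Z}_n$ with orbits the residue classes modulo $g$, and it moves blocks rather than positions inside blocks. So I would split $W$ into $m$ consecutive blocks of length $g$. Shifting by $k$ then cyclically permutes the blocks by $q$ while preserving positions inside each block. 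The aim is a block pattern that turns into the pattern $u$ vs.\ $v$ after the shift.

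The simplest choice is $W=u\,v^{m-1}$, or better, blocks arranged so that $W$ and $\shift{k}(W)$ share a common prefix and suffix and differ exactly in one block slot, where one has $u$ and the other $v$. With $W=v^{a}\,u\,v^{b}$, the shift by $q$ blocks moves $u$ to another slot, so the two words differ in two slots rather than one. That does not directly reduce to a single $u$ versus $v$ comparison. Instead, I would use the alphabet trick: choose blocks $B_0,\dots,B_{m-1}$ with $B_0=u$ and all other blocks equal, then handle the two mismatched slots.

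The cleaner route is to exploit that DFA separation of $xyz$ versus $xy'z$ only requires separating $y$ from $y'$ in the sense $\sepw(y,y')\le\sepw(xyz,xy'z)$. Indeed, a DFA separating $xyz$ from $xy'z$ yields one for $y,y'$: start at $\delta(s,x)$ and accept the states $r$ such that $\delta(r,z)$ is accepting, using the same number of states. So I want $W$ and $\shift{k}(W)$ to differ in a single contiguous window whose contents are $u$ and $v$. Take $m\ge2$ blocks (if $m=1$ then $g=n$, but $0<k<n$ forces $m\ge2$). Set $W$'s blocks so that $\shift{k}(W)$ agrees with $W$ outside one window. When $m=2$ (so $q=1$), put $W=uv$ and $\shift{k}(W)=vu$. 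Then the window is the whole word, and I need $\sepw(u,v)\le\sepw(uv,vu)$, which is not immediate.

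The main obstacle is therefore arranging a single differing window. I would instead prove a weaker-looking but sufficient statement: $S(g)$ is attained, or bounded, by pairs $u,v$ and I am free to choose them. Then I would use a cycle of blocks $B_i=v$ except in a carefully chosen arc. Consider block indices $i$ along the cycle of step $q$, namely $0,q,2q,\dots$, which visits all $m$ blocks. Define $B_{jq}=c_j$ for a sequence $c_0,\dots,c_{m-1}$ of length-$g$ words. The shift replaces $c_j$ by $c_{j+1}$ in each slot. Choosing $c_0=u$, $c_1=v$ and $c_j=v$ for $j\ge2$ would still leave two differences.

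So the fix I would try first is an additional prefix/suffix argument using the DFA's state after reading: the two differing slots are $0$ (with $u$ vs.\ $v$) and $-q$ (with $v$ vs.\ $u$). I expect to need a hybrid argument: among the words $W$, the hybrid $W'$ with only the first difference, and $\shift{k}(W)$, a separator of $W$ and $\shift{k}(W)$ separates one of the consecutive pairs, each a single-window pair. However, $W'$ is not in the orbit, so this hybrid argument needs care, and it is where I expect the proof to be hardest.
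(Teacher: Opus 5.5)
Your proposal never closes the argument. It ends by calling the hybrid step ``where I expect the proof to be hardest,'' and earlier says that $\sepw(u,v)\le\sepw(uv,vu)$ is ``not immediate.'' Both worries are misplaced: the hybrid argument in your last paragraph already works as stated, so the gap is closed by what you wrote.

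Take $W=u\,v^{m-1}$. Your block analysis gives $\shift{k}(W)=v^{m-q}\,u\,v^{q-1}$. Let $W'=v^m$. Suppose $D$ separates $W$ from $\shift{k}(W)$. Whatever $D$ does on $W'$, it separates $W'$ from one of $W$ and $\shift{k}(W)$. Each of these two pairs has the form $xyz$ versus $xy'z$ with $\{y,y'\}=\{u,v\}$. Your window lemma then gives $|D|\ge\sepw(u,v)$. It does not matter that $W'$ is not a rotation of $W$. It is only an auxiliary word used to bound the size of $D$, and nothing about $R(n,k)$ is applied to it. Maximizing over $u\ne v$ gives $S(g)\le R(n,k)$. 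The same hybrid through $vv$ settles your $m=2$ case $uv$ versus $vu$.

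The paper avoids the hybrid altogether. It takes $W=x^ry^s$ with $r=k/g$ and $s=(n-k)/g$, so that $\shift{k}(W)=y^sx^r$. Suppose $x$ and $y$ induced the same transformation of the state set. Then both words would induce its $(r+s)$-th power, and no DFA could separate them. Hence some state $q$ has $\delta(q,x)\ne\delta(q,y)$, and restarting the automaton at $q$ separates $x$ from $y$ with the same states.

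This transformation argument is exactly what makes $\sepw(u,v)\le\sepw(uv,vu)$ immediate. It would also apply directly to your $W=u\,v^{m-1}$, since $W$ and $\shift{k}(W)$ are both products of $m$ blocks from $\{u,v\}$. Your route via block permutation, the window lemma, and a hybrid is also valid. It just uses one more step than the paper's one-line observation.
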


 \begin{proof}
 Write $r=k/g$ and $s=(n-k)/g$, so $r+s=n/g$.
 Take any distinct binary words $x,y$ of length $g$, and form
 \[
 W=x^r y^s.
 \]
 Shifting by $k=rg$ gives $\shift{k}(W)=y^s x^r$.
 These words are distinct because their first length-$g$ blocks are $x$ and $y$.

 If a DFA separates these two words, the transformations induced by $x$ and $y$ differ on some state $q$; otherwise the two long products would induce the same transformation.
 Starting the same transition graph at $q$ and choosing one of these two target states to be accepting gives a DFA of the same size separating $x$ from $y$.
 Maximizing over $x,y$ proves the claim.
 \end{proof}

 \begin{proposition}
 \label{prop:conjugate-gcd-upper}
 $R(n,k)=O\bigl(g\log^2(n/g)\bigr)$.
 \end{proposition}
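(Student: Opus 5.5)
The plan is to reduce to a single residue class of positions modulo $g$ and then use a modular counting fingerprint, in the spirit of the proof of Theorem~\ref{thm:few-runs}. Put $m=n/g$ and $r=k/g$, so $\gcd(r,m)=1$. For each residue $c$ modulo $g$, let $u^{(c)}$ be the length-$m$ subword $W_cW_{c+g}W_{c+2g}\cdots$. The corresponding subword of $\shift{k}(W)$ is then the cyclic rotation $\shift{r}(u^{(c)})$. Since $W\ne\shift{k}(W)$, some class $c$ has $u=u^{(c)}\ne\shift{r}(u)$. Because $r$ generates $\mathbb{Z}/m$, the word $u$ is not constant; I expect to use this fact when choosing the moduli below.

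A DFA can track the input position modulo $g$ and act only on positions congruent to $c$. This costs a factor of $g$ in the state count. It therefore suffices to separate $u$ from $v=\shift{r}(u)$ with $O(\log^2 m)$ states, using a machine that tracks position modulo $q$ and counts $1$'s modulo $p$ in one residue class $t$ modulo $q$. The product machine has $g\cdot q\cdot p$ states. The goal is to find primes $q,p=O(\log m)$ and a residue $t$ such that
\[
N_t(u)\not\equiv N_t(v)\pmod p,\qquad N_t(y)=\#\{j\equiv t\ (\mathrm{mod}\ q): y_j=1\}.
\]
This generalizes the parity fingerprint $a_{p,c}$ to counting modulo $p$ with residues taken modulo $q$.

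To find $q$ and $p$, I will encode the words as polynomials $u(z)=\sum_j u_jz^j$ and $v(z)$. I will use
\[
v(z)\equiv z^{m-r}u(z)\pmod{z^m-1},
\]
and set $D=v-u$, which is a nonzero polynomial of degree less than $m$ with coefficients in $\{-1,0,1\}$. The vector of counts $(N_t)_t$ is the reduction modulo $z^q-1$. So I need, first, a prime $q=O(\log m)$ with $D\bmod (z^q-1)\ne0$ over $\mathbb{Z}$. Second, I need a prime $p=O(\log m)$ not dividing some nonzero coefficient of that reduction; the coefficients are at most $m$ in absolute value. The second step works exactly as the prime-number-theorem step in Theorem~\ref{thm:few-runs}: the product of the primes up to $C\log m$ exceeds $m$.

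The main obstacle is the first step, choosing $q$. A naive cyclotomic-divisibility argument only gives $q=O(\sqrt m)$, so the structure of $D$ has to be used. I would first try a boundary argument like the one in Theorem~\ref{thm:few-runs}. Apply that argument to the cyclic word $u$ (nonconstant, so it has boundaries) and to the shifted boundary pattern of $v$. Pick a specific extreme position, say the first index $a$ where $u$ and $v$ differ, and then choose $q$ prime, $q=O(\log m)$, so that $a$ is not congruent modulo $q$ to any other member of a small set of positions that could cancel it. If that set can be taken to have size polynomial in $m$, a product of differences as in $M$ above gives $q=O(\log m)$. Controlling the set of cancelling positions, possibly by replacing counts with the run/boundary parities used in Theorem~\ref{thm:few-runs}, is where I expect the real work to lie.
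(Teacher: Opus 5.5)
You have the right outer structure: the reduction to one residue class modulo $g$, and the counting template (track the position modulo $q$, count $1$'s modulo $p$ in one class, $gqp$ states). The gap is that the proposal stops exactly where the content of the proposition lies. You never show that a prime $q=O(\log m)$ with $D\bmod(z^q-1)\ne0$ exists, and the route you sketch would not give one. Suppose the first differing position $a$ must be separated modulo $q$ from a set $S$ of other positions. The product $\prod_{j\in S}|j-a|$ is bounded only by $m^{|S|}$, so the prime-number-theorem step yields only $q=O(|S|\log m)$. That is $O(\log m)$ only when $|S|=O(1)$, not when $|S|$ is merely polynomial in $m$. For conjugates, $S$ is not small. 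The words $u$ and $\shift{r}(u)$ can differ in $\Theta(m)$ positions, and $u\oplus\shift{r}(u)$ can have $\Theta(m)$ runs: take $r=1$ and $u$ a prefix of $(0011)^\infty$, where $u\oplus\shift{1}(u)$ alternates. So both the Hamming version and the boundary version of the argument of Theorem~\ref{thm:few-runs} give only $q=O(m\log m)$.

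The missing idea is to use one integer invariant instead of isolating a single position: the first moment $\sum_j j\,u_j$ of the positions of the $1$'s. Under the rotation, the $a$ ones at positions $j\ge r$ move back by $r$, and the $b$ ones at positions $j<r$ move forward by $\ell=m-r$. So the moment changes by $\Delta=\ell b-ra$. If $\Delta=0$, then $\gcd(r,\ell)=1$ forces $\ell\mid a$ and $r\mid b$. Since $a\le\ell$ and $b\le r$, the word would then be all $0$'s or all $1$'s; this is where coprimality and nonconstancy are really used. Hence $0<|\Delta|<m^2$, and some prime $p=O(\log m)$ does not divide $\Delta$. Let $N_t$ count the $1$'s at positions congruent to $t$ modulo $p$. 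Then $\sum_j j\,y_j\equiv\sum_{t=0}^{p-1}t\,N_t(y)\pmod p$, so some class $t$ has $N_t(u)\not\equiv N_t(v)\pmod p$. You may therefore take $q=p$ in your template and obtain $gp^2=O(g\log^2(n/g))$ states. In your polynomial language, this is the fact that $D'(1)=\Delta\ne0$, together with the observation that $D'(1)\bmod p$ depends only on $D$ modulo $p$ and $z^p-1$. No separate cyclotomic or cancellation argument for choosing $q$ is needed.
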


 \begin{proof}
 Fix $W\ne\shift{k}(W)$, and put
 \[
 r=\frac{k}{g},
 \qquad
 \ell=\frac{n-k}{g}.
 \]
 Consider the positions in one fixed residue class modulo $g$, which form a binary word of length $n/g$.
 The shift moves each symbol backward by $r$ positions modulo $n/g$.
 Since $W\ne\shift{k}(W)$, at least one of these $g$ words is nonconstant; fix such a word.

 Let $a$ be the number of its $1$'s that move backward without wrapping, and let $b$ be the number that wrap around.
 The former move by $-r$ positions, while the latter move by $\ell$ positions.
 The sum of the positions of the $1$'s changes by
 \[
 \Delta=\ell b-ra.
 \]
 Suppose $\Delta=0$.
 Then $\gcd(r,\ell)=1$ implies that $\ell\mid a$ and $r\mid b$.
 Since there are only $\ell$ nonwrapping positions and $r$ wrapping positions, this would make the chosen word constant, contradicting its choice.

 Since $0<|\Delta|<(n/g)^2$, some prime $p=O\bigl(\log(n/g)\bigr)$ does not divide $\Delta$.
 Modulo $p$, the sum of the positions of the $1$'s is determined by the numbers of $1$'s in the $p$ residue classes of positions.
 Thus one residue class has different counts modulo $p$ before and after the shift.

 In the full words, this is one residue class modulo $gp$.
 A DFA tracks the input position modulo $gp$ and counts modulo $p$ the $1$'s in that class.
 It has $gp^2=O\bigl(g\log^2(n/g)\bigr)$ states and separates $W$ and $\shift{k}(W)$.
 \end{proof}

 \begin{proposition}
 \label{prop:small-shift}
 $R(n,k)=O(\min\{k,n-k\})$.
 \end{proposition}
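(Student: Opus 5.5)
By symmetry it suffices to prove $R(n,k)=O(k)$. If $V=\shift{k}(W)$, then $W=\shift{n-k}(V)$, and a DFA separates the pair $\{W,V\}$ no matter which word it accepts. So the bound $O(k)$ applied to the shift $n-k$ gives $O(n-k)$, and we get $O(\min\{k,n-k\})$. Write $W=uv$ with $|u|=k$, so $\shift{k}(W)=vu$. The hypothesis $W\ne\shift{k}(W)$ becomes $uv\ne vu$, which by Lyndon--Sch\"utzenberger means $u$ and $v$ are not powers of a common word. The goal is a separator whose size depends only on $|u|=k$, never on $|v|$.

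The first tool is the pattern-matching (KMP) automaton for the pattern $u$. It has $k+1$ states, and after reading any input its state is the length of the longest suffix of the input that is a prefix of $u$. On $vu$ it ends in state $k$. So if $uv$ does not end with $u$, this automaton already separates the pair with $k+1$ states. The second tool is the $(k+1)$-state automaton that cycles through the letters of $u$ and falls into a sink on a mismatch, so it accepts exactly the prefixes of $u^\infty$. The words $uv$ and $vu$ have the same length, so both cannot be prefixes of $u^\infty$. Hence if exactly one of them is, this automaton separates them. The same two constructions also work with $u$ replaced by any rotation of $u$, since that only changes the starting state on the cycle.

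The remaining case is where these simple tests fail. Then $uv$ ends with $u$, and neither word is a prefix of $u^\infty$ (or some analogous condition holds). Here I plan a Euclid-style reduction. If $v=v'u$, then $uv=uv'u$ and $vu=v'uu$. Peeling off the forced occurrences of $u$ shortens $v$ by $k$ at each step while keeping the relation $uv'\ne v'u$. After at most $|v|/k$ steps, either one of the simple tests succeeds, or $|v|<k$. In the latter case we have a conjugate pair with shift smaller than $k$ and total length less than $2k$. Such a pair differs in some position $i<2k$, and a chain of $O(k)$ states that reads to position $i$ and checks that one symbol separates it.

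The main obstacle is making this reduction respect automata. A separator for the shorter pair $(uv',v'u)$ does not directly separate $(uv'u,v'uu)$, because of the appended common suffix $u$. I would fix this by keeping the automaton's dependence on the word only through the position modulo $k$ and a KMP-type match state. Concretely, I would use a product of a mod-$k$ counter with the matching automaton, which has $O(k)$ states after merging, and I would argue that reading a trailing $u$ acts as the same function on the reachable states for both words. If that invariant fails, my fallback is the fingerprint method of Proposition~\ref{prop:conjugate-gcd-upper}. Using residue classes modulo $k$, the shift moves a $1$ from class $c$ to class $c+(n\bmod k)$ exactly when it wraps, that is, when it lies in $u$. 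I would then try to show that some class has a different count of $1$'s modulo a small constant, using a cycle of length $k$ with constant-size counters.
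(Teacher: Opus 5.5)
\paragraph{Assessment.} Your symmetry reduction $R(n,k)=R(n,n-k)$ is correct and matches the paper. However, the case you call ``remaining'' is the whole difficulty, and your plan for it does not go through. The Euclidean peeling fails exactly where you flag it: a test that succeeds on the reduced pair $(uv',v'u)$ need not survive the appended common suffix.

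Concretely, take $k=2$, $u=01$, $v=(01)^m1\,01$ with $m\ge1$, so $uv=(01)^{m+1}101$ and $vu=(01)^m10101$. Both words end in $u$, and neither is a prefix of $u^\infty$ or of $(10)^\infty$. So all of your simple tests fail on this pair, including the rotated ones. One peeling step leaves $((01)^{m+1}1,(01)^m101)$. The KMP automaton for $u$ does separate this reduced pair, but it puts the original words in the same state. The original words first differ at position $2m$, so the read-to-a-position chain needs about $2m$ states, not $O(k)$. Your other exit, $|v|<k$, is actually fine, since a chain ending in absorbing accept/reject sinks ignores appended suffixes.

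Your two proposed fixes also do not close the gap. The claimed invariant, that a trailing $u$ acts the same way on the reachable states for both words, is only asserted. The fingerprint fallback cannot work in general: if $k\mid n$, then $(i+k)\bmod n\equiv i\pmod k$. So the shift preserves every residue class modulo $k$, and all class counts of $W$ and $\shift{k}(W)$ coincide.

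\paragraph{Missing idea.} The idea you need is already latent in your second tool: on a mismatch, do not fall into a sink; keep counting. Let $U=u^\infty$. Since $\shift{k}(W)_i=W_{i+k}$ and $U$ is $k$-periodic, suppose $W$ first differs from $U$ at position $L\ge k$. Then $\shift{k}(W)$ agrees with $U$ before position $L-k$ and differs from it at $L-k$. If $W$ never differs from $U$, then $\shift{k}(W)$ must, since otherwise the two words would coincide.

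So follow $U$ with $k$ states, tracking the position modulo $k$. At the first mismatch, move to a cycle of $k+1$ states that counts the remaining length modulo $k+1$. If only one word deviates, the two runs end in different parts of the automaton. If both deviate, their remaining lengths differ by exactly $k\not\equiv0\pmod{k+1}$, so they end at different cycle states. In your example, $uv$ deviates with $2$ symbols left and $vu$ with $4$, and $2\not\equiv4\pmod 3$.

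This gives $R(n,k)\le 2k+1$ uniformly. It also makes the KMP automaton, Lyndon--Sch\"utzenberger, and the Euclidean reduction unnecessary.
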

 \begin{proof}
  Since $\shift{n-k}$ is the inverse of $\shift{k}$ and $\sepw$ is symmetric,
  \[
   R(n,k)=R(n,n-k).
  \]
  It is enough to prove the explicit bound $R(n,k)\le 2k+1$.

  Fix $W\ne\shift{k}(W)$, and write
  \[
   V=\shift{k}(W).
  \]
  Let $U$ be the infinite $k$-periodic word obtained by repeating the first
  $k$ symbols of $W$.

  Suppose that $W$ first differs from $U$ at position $L$.
  Then $L\ge k$, and for every $i<L-k$,
  \[
   V_i=W_{i+k}=U_{i+k}=U_i,
  \]
  whereas
  \[
   V_{L-k}=W_L\ne U_L=U_{L-k}.
  \]
  The first difference between $V$ and $U$ occurs at position $L-k$.
  If $W$ never differs from $U$, then $V$ must differ from $U$, since
  otherwise both words would equal the length-$n$ prefix of $U$.

  A DFA can follow $U$ using $k$ states to track the input position modulo
  $k$.  At the first difference, it switches to a cycle of $k+1$ states that
  counts the remaining input length modulo $k+1$.
  If only one of $W$ and $V$ differs from $U$, one finishes among the first
  $k$ states and the other on the $(k+1)$-cycle.
  If both differ, their first differing positions are $k$ apart, so they
  finish at different positions on that cycle.
  A DFA with $2k+1$ states therefore separates $W$ and $V$.
  The symmetry above gives
  \[
   R(n,k)\le 2\min\{k,n-k\}+1
   =O(\min\{k,n-k\}).
  \]
 \end{proof}

 \section{Nondeterministic separation}

 An NFA has one initial state, an arbitrary set of accepting states, and a
 possibly partial transition relation.
 It accepts a word if some path labeled by that word leads from the initial
 state to an accepting state.
 We do not use $\varepsilon$-transitions; they can be eliminated without
 adding states.
 We write $\nsep(w,x)$ for the number of states in the smallest NFA accepting
 $w$ and rejecting $x$; the order of the two words matters.

 \subsection{The order of the two words}

 Open Problem~2 of~\cite{DemaineEtAl2011} asks whether the difference
 \[
  |\nsep(w,x)-\nsep(x,w)|
 \]
 between the two orientations can be unbounded.
 We show that it can; moreover, the corresponding ratio can be unbounded,
 even when the two words differ in only two positions.

 The proof adapts the consecutive-cycle construction from Theorem~3 of~\cite{DemaineEtAl2011}.

 \begin{theorem}
  \label{thm:nsep-asymmetry}
  For every integer $s\ge2$, there are equal-length binary words $u_s,v_s$ such that
  \[
   \nsep(u_s,v_s)>s^2,
   \qquad
   \nsep(v_s,u_s)\le s+1.
  \]
  Both the difference and the ratio between the two orientations are unbounded.
 \end{theorem}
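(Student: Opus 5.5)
The plan is to take two words with the same $0$-blocks but with the single $1$ in a different place: $u_s=0^{A}10^{B}$ and $v_s=0^{B}10^{A}$ with $A<B$. Swapping two symbols keeps the lengths equal and the Hamming distance at two. The two orientations will be handled by two different mechanisms. To accept $v_s$ and reject $u_s$ cheaply, I will use a unary NFA whose accepted lengths form a numerical semigroup generated by two consecutive integers, which is where the Frobenius number enters. For the other orientation, I will follow the consecutive-cycle argument behind Theorem~3 of~\cite{DemaineEtAl2011}, which pumps by $\operatorname{lcm}$-multiples.

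For the upper bound $\nsep(v_s,u_s)\le s+1$, I take a cycle of $s-1$ states on the letter $0$ through the initial state $q_0$, together with one extra state. Among these I want the closed walks from $q_0$ on $0$'s to have exactly the lengths in the semigroup $\langle s-1,s\rangle$. I also add a $1$-transition from $q_0$ to an accepting sink that loops on $0$; this state can be reused if the bookkeeping is arranged so that the count stays at $s+1$. The NFA then accepts $0^{a}10^{*}$ exactly when $a\in\langle s-1,s\rangle$. I choose $A=(s-1)(s-2)-1$, the Frobenius number of $\langle s-1,s\rangle$, so that $u_s$ is rejected. I choose $B$ of the form $A+L$ with $L=\operatorname{lcm}(1,\ldots,K)$ for $K$ slightly above $s^2$; any such $B$ is large enough to lie in the semigroup, so $v_s$ is accepted.

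For the lower bound, suppose an NFA with $k\le s^2$ states accepts $u_s=0^{A}10^{B}$. I will fix an accepting path and examine the segment that reads $0^{B}$. Since $B\ge k$, this segment repeats a state, so it contains a cycle of some length $c\le k$ with $c\mid L$. Removing copies of this cycle and re-inserting copies of a cycle from the first block should convert the path into an accepting path for $0^{B}10^{A}=v_s$; this requires the first block to contain a cycle as well. Because $c\mid L$ and $B-A=L$, the block lengths can be traded by exactly $L$. That would contradict rejection of $v_s$, so any NFA accepting $u_s$ and rejecting $v_s$ has more than $s^2$ states. Comparing $s^2$ with $s+1$ then shows that both the difference and the ratio are unbounded.

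The main obstacle is that the two thresholds pull against each other. The pumping argument needs each $0$-block to be at least $k\approx s^2$ long so that it contains a cycle, while the Frobenius trick pins $A$ at about $s^2-3s$, which is below $s^2$. My first fix is to decouple the two requirements by padding the words with a common prefix $0^{P}$, with $P$ a multiple of $s-1$, read by the same cycle, placed before a separator. For example, I would use $u_s=0^{P}10^{A}10^{B}$ and $v_s=0^{P}10^{B}10^{A}$, which still differ in exactly two positions. The small NFA then checks only the middle block, and in the pumping step the prefix supplies the needed cycle. If that fails, the fallback is to weaken the lower bound to $>cs^2$ for a constant $c<1$ and rescale $s$. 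That fallback proves unboundedness but not the exact bounds $\nsep(u_s,v_s)>s^2$ and $\nsep(v_s,u_s)\le s+1$ stated for every $s\ge2$.
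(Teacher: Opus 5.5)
\textbf{The word pair is the problem, not the bookkeeping.} For your basic pair, $\rev{u_s}=0^B10^A=v_s$ and $\rev{v_s}=u_s$. Reversal invariance, $\nsep(w,x)=\nsep(\rev{w},\rev{x})$, therefore forces $\nsep(u_s,v_s)=\nsep(v_s,u_s)$, so no asymmetry is possible.

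Concretely, your own gadget already separates in the direction you want to be hard. Let an initial state loop on $0$ and $1$ and jump on $1$ to $q_0$ of the $\langle s-1,s\rangle$ gadget. Make $q_0$ the only accepting state, and give the gadget no $1$-transitions. These $s+1$ states accept exactly the words ending in $10^n$ with $n\in\langle s-1,s\rangle$. Hence they accept $u_s$, since every $B>A$ lies in the semigroup, and reject $v_s$.

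The padded pair $0^P10^A10^B$ versus $0^P10^B10^A$ still ends in $10^B$ versus $10^A$, so the same NFA separates it in the supposedly hard direction. The padding also does not supply the needed cycle: the prefix is cut off from the middle block by a $1$, so pumping it lengthens $0^P$, not $0^A$. Rescaling $s$ cannot rescue the fallback, because the ratio stays bounded. Separately, $A=(s-1)(s-2)-1$ equals $-1$ at $s=2$, and a gap of $s^2-3s+1$ is too small for the exact bound $>s^2$.

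\textbf{The missing idea.} The short block may occur only once, as the first block of $u_s$, while both blocks of $v_s$ and the trailing block of $u_s$ are long. The paper takes $u_s=0^M10^{T+P}$ and $v_s=0^{M+P}10^T$, with $M=s^2$, $T=2^M$ and $P=\operatorname{lcm}(1,\ldots,T)$.
\begin{itemize}
\item Lengthening $0^M$ to $0^{M+P}$ pumps a single loop of length $d\le m$, as you intended.
\item Shrinking $0^{T+P}$ to $0^T$ cannot be done by ``removing copies'' of a pigeonhole cycle, because that cycle may occur only once on the path.
\item Instead, the sets $S_j$ of states reachable from $q$ by $0^j$ are eventually periodic, with preperiod plus period at most $2^m\le T$. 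This gives $S_{T+P}=S_T$.
\end{itemize}
That last step needs the shorter trailing block to exceed the preperiod, which is exactly what your trailing $0^A$ with $A<s^2$ violates.

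For the easy orientation, the paper's $(s+1)$-state NFA has largest gap exactly $s^2$. Its state $q_0$ is initial and accepting with a $1$-loop. There is a $0$-path $q_0\to q_1\to\cdots\to q_s$, and $0$-edges go from $q_s$ to $q_0$ and to $q_1$. The return lengths are therefore $as+b(s+1)$ with $a\ge0$ and $b\ge1$: every length above $s^2$ occurs and $s^2$ does not.
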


 \begin{proof}
  Put
  \[
   M=s^2,
   \qquad
   T=2^M,
   \qquad
   P=\operatorname{lcm}(1,2,\ldots,T),
  \]
  and define
  \[
   u_s=0^M1\,0^{T+P},
   \qquad
   v_s=0^{M+P}1\,0^T.
  \]
  These words have the same length and differ in two positions.

  Suppose an NFA with $m\le M$ states accepts $u_s$.
  Some accepting path has the form
  \[
   i\xrightarrow{0^M}p\xrightarrow{1}q\xrightarrow{0^{T+P}}f.
  \]
  Since $M\ge m$, the first segment contains a loop of some length $d\le m$.
  Since $d\mid P$, repeating this loop $P/d$ additional times gives a path
  from $i$ to $p$ labeled $0^{M+P}$.

  On the final segment, let $S_j$ be the set of states reachable from $q$
  by $0^j$.
  There are at most $2^m\le T$ possible sets $S_j$, and $S_{j+1}$ is
  determined by $S_j$.
  Thus $S_T$ lies on a cycle of some length $c\le T$.
  Since $c\mid P$, it reaches the same subset after $T+P$ zeros.
  Hence $f$ is reachable from $q$ by $0^T$ as well.
  The resulting path accepts $v_s$.
  Hence
  \[
   \nsep(u_s,v_s)>M=s^2.
  \]

  For the opposite orientation, use the $(s+1)$-state NFA in
  Figure~\ref{fig:asymmetry-nfa}.

  \begin{figure}[htbp]
   \centering
   \begin{tikzpicture}
    [
    >=Stealth,
    every state/.style={minimum size=9mm, inner sep=1pt},
    every edge/.style={draw,->},
    initial text={},
    scale=.88,
    transform shape
    ]
    \node[state, initial, accepting] (q0) at (0,0) {$q_0$};
    \node[state] (q1) at (1.55,0) {$q_1$};
    \node[state] (q2) at (3.1,0) {$q_2$};
    \node (dots) at (4.35,0) {$\cdots$};
    \node[state] (qsm1) at (5.65,0) {$q_{s-1}$};
    \node[state] (qs) at (7.2,0) {$q_s$};

    \path
 (q0) edge[loop above] node {$1$} (q0)
 (q0) edge node[above] {$0$} (q1)
    (q1) edge node[above] {$0$} (q2)
    (q2) edge node[above] {$0$} (dots)
    (dots) edge node[above] {$0$} (qsm1)
    (qsm1) edge node[above] {$0$} (qs)
    (qs) edge[bend left=34] node[above] {$0$} (q0)
    (qs) edge[bend right=31] node[below] {$0$} (q1);
   \end{tikzpicture}
   \caption{The $(s+1)$-state NFA separating $v_s$ from $u_s$.}
   \label{fig:asymmetry-nfa}
  \end{figure}
  \FloatBarrier

  For $\ell>0$, the NFA accepts $0^\ell$ exactly when
  \[
   \ell=as+b(s+1),\qquad a\ge0,\quad b\ge1.
  \]
  Every return to $q_0$ uses the $(s+1)$-cycle once, while before
  such a return the path may traverse the $s$-cycle any number of times.
  Modulo $s$, we have $\ell\equiv b$.
  Taking $1,\ldots,s$ as representatives modulo $s$, the smallest possible
  $b\ge1$ in residue class $r$ is $r$; taking $a=0$ gives the least accepted
  length $r(s+1)$ in that class, and increasing $a$ gives every later length.
  Since lengths in a fixed residue class differ by $s$, the largest
  nonaccepted length in residue class $r$ is $r(s+1)-s$.
  Thus the largest $\ell$ for which $0^\ell$ does not take $q_0$
  back to $q_0$ is
  \[
   \max_{1\le r\le s}\bigl(r(s+1)-s\bigr)
   =s(s+1)-s=s^2=M.
  \]
  Since $P>0$ and $T=2^M>M$, both blocks of zeros in $v_s$ are longer than $M$.
  Hence the NFA accepts $v_s$.
  No path labeled $0^M$ returns from $q_0$ to $q_0$.
  Since $q_0$ is the only state with a transition labeled $1$, the NFA
  rejects $u_s$.
  Thus
  \[
   \nsep(v_s,u_s)\le s+1.
  \]
 \end{proof}

 Define
 \[
 A(q)=\max\left\{\nsep(w,x):
 \substack{|w|=|x|,\ w\ne x,\\ \nsep(x,w)\le q}
 \right\}.
 \]
 For $q\ge3$, Theorem~\ref{thm:nsep-asymmetry} with $s=q-1$ gives the lower bound,
 while determinizing and complementing gives the upper bound:
 \[
 (q-1)^2+1\le A(q)\le2^q.
 \]
 The exact growth of $A(q)$ remains open.

 \subsection{Deterministic versus nondeterministic separation}
 Open Problem~3 of~\cite{DemaineEtAl2011} asks for better bounds on the ratio
 of deterministic to nondeterministic separation as a function of word length.
 For the exact common-length convention, define
 \[
 H(L)=\max\left\{\frac{\sepw(w,x)}{\nsep(w,x)}:
 |w|=|x|=L,\ w\ne x\right\}.
 \]

 For a word $w$, write $\rev{w}$ for its reversal.

 We first use the following reversal theorem proved in Section~\ref{sec:reversals} to improve the lower bound on
 $H(L)$, and then use the NFA size itself as the parameter.

 \begin{theorem}
  \label{thm:reversal-gap}
  There is an absolute constant $c_0>0$ such that, for every $N\ge2$ and every
  integer $L\ge N^{c_0N}$, there are distinct words
  $x_{N,L},y_{N,L}\in\{0,1\}^L$ such that
  \[
   \begin{aligned}
    \sepw(x_{N,L},y_{N,L})&>N,\\
    \sepw(\rev{x_{N,L}},\rev{y_{N,L}})&=O(\log N).
   \end{aligned}
  \]
 \end{theorem}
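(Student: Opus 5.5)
The plan is to construct $x_{N,L},y_{N,L}$ so that their equality under every $N$-state DFA is forced by an \emph{identity} valid in all transformation monoids on $N$ points. At the same time, the reversed pair should differ in a way that a small fingerprint automaton, in the spirit of Theorem~\ref{thm:few-runs} and Proposition~\ref{prop:conjugate-gcd-upper}, detects with $O(\log N)$ states. The asymmetry must come from the fact that reading left to right versus right to left changes which part of the word acts first. A DFA reading forward applies the transformations in one order. A DFA reading the reversal applies, in effect, the transformations of the \emph{reversed} semigroup, and the reversed semigroup of a small DFA need not satisfy the same identities.

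\textbf{Warning: the candidate pair below fails, and I do not have a construction that passes the needed check.} The first choice is to recycle Theorem~\ref{thm:nsep-asymmetry}: put $P=\operatorname{lcm}(1,\dots,N)$ and take words such as $x=0^{N+P}1\,0^{N}$ and $y=0^{N}1\,0^{N+P}$. Then $0^{N+P}$ and $0^N$ induce the same transformation in every DFA with at most $N$ states, so both words induce the same transformation and $\sepw(x,y)>N$. But the reversed words differ only by shifting the single $1$ by $P$ positions, and every modulus $m\le N$ divides $P$. So no counter of size $O(\log N)$ sees this shift, and this pair fails the requirement on $\rev{x},\rev{y}$.

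The real construction therefore has to replace the unary identity by a genuinely two-letter identity $u=v$ holding in all $N$-state transformation monoids. I would build it by nesting: first make a word $e$ that acts idempotently, for instance a sufficiently high power $z^{N!}$ or $z^{P}$ of a fixed word. Then form $u=e\,\alpha\,e$ and $v=e\,\beta\,e$, where $\alpha$ and $\beta$ are chosen so that $e\alpha e=e\beta e$ is forced forward. The nesting depth is bounded in terms of $N$, and each level costs a length factor of $N^{O(1)}$, giving the length $N^{c_0N}$. Padding with a long common block of $0$'s then reaches every $L\ge N^{c_0N}$ while preserving the identity.

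For the reversed words, the aim is to arrange that $\rev{\alpha}$ and $\rev{\beta}$ differ in a bounded number of runs, placed at positions whose differences have only $O(N\log N)$ bits. Then the prime-avoidance argument of Theorem~\ref{thm:few-runs} finds a prime $p=O(\log N)$ that sees the difference. This needs one key extra ingredient, which I do not yet have a way to realise: the reversed transformation of $\rev{e}$ must preserve a mod-$p$ fingerprint, for example by having $e$ built from blocks whose lengths are $\equiv 0 \pmod p$ for every small $p$ while their $1$-counts are not, so that the fingerprint passes through $e$ unchanged only in one reading direction.

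The main obstacle is exactly this step: finding an identity that holds for all $N$-state DFAs forward yet whose reversal is broken by a counter modulo a prime of size $O(\log N)$. The pair above, and any identity built only from powers $0^{P}$, fails the check, because every small modulus divides $P$. I would first look for identities whose two sides have equal lengths and equal letter counts modulo every $m\le N$, but different weighted position sums in the reversed reading. The bound on $\sepw(\rev{x},\rev{y})$ would then come from a prime $p=O(\log N)$ not dividing that weighted difference.
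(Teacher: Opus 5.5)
\textbf{There is a genuine gap: you do not construct the pair, and the mechanism you plan for the reversed side cannot work.} Every fingerprint you propose for $\rev{x},\rev{y}$ is a permutation automaton, meaning each letter acts bijectively on the states. This covers position modulo $p$ combined with a letter count or a position-weighted sum modulo $p$, as in Theorem~\ref{thm:few-runs} and Proposition~\ref{prop:conjugate-gcd-upper}.

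Separation by permutation automata is invariant under reversal. Let $T_x=\delta_{x_n}\circ\cdots\circ\delta_{x_1}$ be the transformation induced by $x=x_1\cdots x_n$. The automaton with transitions $\delta_a^{-1}$ induces $T_{\rev{x}}^{-1}$ on $x$. So suppose a permutation automaton with at most $N$ states separates $\rev{x}$ and $\rev{y}$. Then $T_{\rev{x}}^{-1}\ne T_{\rev{y}}^{-1}$, and starting the inverted automaton at a state where they differ separates $x$ and $y$ with at most $N$ states. This contradicts $\sepw(x,y)>N$.

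Your remark that reversed semigroups need not satisfy the same identities is true in general, but false for groups, and groups are exactly the automata you then try to use. The failure of $0^{N+P}1\,0^N$ versus $0^N1\,0^{N+P}$ is an instance of this obstruction, not a defect of that particular identity. Likewise, take equal-length words with the same number $c$ of $1$'s. The reversed position sum equals $(n+1)c$ minus the forward one, so a weighted-sum difference in the reversed reading is also a forward difference. Separately, your padding step checks only the forward side. The paper pads with $10^{d-1}$ so that the reversed separator needs only one extra state.

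\textbf{The missing idea is an aperiodic, non-group asymmetry.} The model is the \emph{$k$th symbol before a marker}, which needs about $2^k$ states forward but only $O(k)$ in reverse. The paper takes $k$ least with $2^k>N^2$. It lets $G_k$ consist of $\#$-terminated blocks whose $k$th symbol before each $\#$ is $1$, and uses $C_k=(@\overline{G_k})^*$, whose reversal has an $O(k)$-state DFA.

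\begin{itemize}
\item Since $2^k>N^2$, for each DFA and pair of states, two distinct $s,t\in\{1,2\}^k$ send that pair to the same pair. Appending $2^{\ell-1}\#$ moves a differing position to the $k$th place before $\#$, so one block lies in $G_k$ and the other does not.
\item Building these blocks stage by stage over all triples gives $z\in G_k$. It can be swapped for some $h\in\overline{G_k}$ of equal length without changing the action on any prescribed pair.
\item Your idempotent sandwich then takes the form of a common suffix $w\in\mathcal R_z^*$ whose image size no further word of $\mathcal R_z^*$ can decrease, for all DFAs at once.
\item Both $g=@1^{|z|}w$ and $b=@zw$ permute $P=\delta_D(Q,w)$. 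For $b$ this holds because pairwise replacement keeps distinct states of $P$ distinct.
\item Hence $u_N=wg^{N!}\in C_k$ and $v_N=wb^{N!}\notin C_k$ act identically in every DFA with at most $N$ states. Meanwhile the DFA for $\rev{C_k}$ separates their reversals.
\end{itemize}

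A palindromic binary code and padding then give the theorem.
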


 It was observed in~\cite{DemaineEtAl2011} that nondeterministic separation is
 unchanged by reversal:
 \[
 \nsep(w,x)=\nsep(\rev{w},\rev{x}).
 \]
 To see this, take an NFA accepting $w$ and rejecting $x$, and fix an accepting
 path on $w$ ending at a state $f$.
 Reverse every transition, use $f$ as the initial state, and make the old
 initial state the sole accepting state.
 This NFA accepts $\rev{w}$, while an accepting path on $\rev{x}$ would reverse
 to a path on $x$ ending at $f$, contradicting rejection of $x$.
 The construction preserves the state set, and applying it in the other
 direction proves equality.

 Apply this observation to the words in Theorem~\ref{thm:reversal-gap}.
 The small DFA for the reversed pair gives nondeterministic separation
 $O(\log N)$.
 In the other direction, determinizing any separating $q$-state NFA gives at
 most $2^q$ states.
 Since the deterministic separation is greater than $N$, the nondeterministic
 separation is $\Theta(\log N)$.

 For each sufficiently large $L$, choose $N$ to be a sufficiently small
 constant multiple of
 \[
 \frac{\log L}{\log\log L}.
 \]
 Then $L\ge N^{c_0N}$, where $c_0$ is the constant in
 Theorem~\ref{thm:reversal-gap}.
 The resulting length-$L$ pair gives
 \[
 H(L)=\Omega\!\left(\frac{\log L}{(\log\log L)^2}\right).
 \]
 This improves the $\Omega(\sqrt{\log L})$ lower bound
 in~\cite{DemaineEtAl2011} and holds for every sufficiently large $L$.

 The same construction also shows that the exponential cost of
 determinization is unavoidable.
 For the resulting pairs, let $q=\nsep(w,x)$.
 The reversed separator and invariance under reversal give
 $q=O(\log N)$, while $\sepw(w,x)>N$.
 Together with the determinization bound $\sepw(w,x)\le2^q$, this gives
 \[
  \sepw(w,x)=2^{\Theta(q)}
  =2^{\Theta(\nsep(w,x))}.
 \]

 \section{Reversals}
 \label{sec:reversals}

 This section proves Theorem~\ref{thm:reversal-gap}.

 Like Ebrahimnejad's construction~\cite{Ebrahimnejad2018}, ours uses a
 language $C_k$ whose reversal has a small DFA, together with a
 diagonalization over all DFAs with at most $N$ states.
 The diagonalization produces equal-length words $u_N\in C_k$ and
 $v_N\notin C_k$ that have the same effect on every state of every such DFA\@.
 Thus the forward pair requires more than $N$ states to separate, whereas the
 reversed pair has an $O(\log N)$-state separator.
 Finally, we encode the pair in binary and pad it to every sufficiently large
 length.

 \subsection{The delimiter language}

 For $k\ge1$, let $\Gamma=\{1,2,\#\}$ and put
 \[
 G_k=\bigl(\{1,2\}^*1\{1,2\}^{k-1}\#\bigr)^+,
 \qquad
 \overline{G_k}=\Gamma^*\setminus G_k.
 \]
 Thus $G_k$ consists of one or more blocks ending in $\#$, with $1$ as the
 $k$th symbol before each $\#$.
 Over $\Sigma=\Gamma\cup\{@\}$, define
 \[
 C_k=(@\overline{G_k})^*.
 \]
 The two delimiters have different roles: $\#$ ends the inner blocks, while
 $@$ separates them.
 The two delimiters make the decomposition unique: every word in $C_k$
 is a concatenation of blocks $@h$ with
 $h\in\overline{G_k}$.
 In particular, $@z@\notin C_k$ for $z\in G_k$, whereas
 $@h@\in C_k$ for $h\in\overline{G_k}$.

 Reversing the definition gives
 \[
 \rev{C_k}
 =\bigl((\Gamma^*\setminus\rev{G_k})@\bigr)^*,
 \qquad
 \rev{G_k}
 =\bigl(\#\{1,2\}^{k-1}1\{1,2\}^*\bigr)^+.
 \]
 \begin{lemma}[A small automaton for the delimiter language]
 \label{lem:delimiter-automata}
 The language $\rev{C_k}$ has an $O(k)$-state DFA.
 \end{lemma}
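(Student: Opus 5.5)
The plan is to build the DFA for $\rev{C_k}$ directly, reading the input from left to right. The input is cut into segments at each occurrence of $@$. A word lies in $\rev{C_k}=\bigl((\Gamma^*\setminus\rev{G_k})@\bigr)^*$ exactly when it is empty or ends in $@$, and every segment $h'\in\Gamma^*$ that is terminated by an $@$ lies outside $\rev{G_k}$. Since $@\notin\Gamma$, this decomposition is unique, as the paper already noted for $C_k$. So it suffices to give an $O(k)$-state complete DFA for $\rev{G_k}=\bigl(\#\{1,2\}^{k-1}1\{1,2\}^*\bigr)^+$ over $\Gamma$. We then glue its copies together along $@$-transitions.

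First I would build the automaton for $\rev{G_k}$ with the following states:
\begin{itemize}
 \item a start state $s$, meaning no symbol of the current segment has been read;
 \item counter states $c_0,\dots,c_{k-1}$, where $c_j$ means that we have read $\#$ followed by $j$ symbols from $\{1,2\}$, with $j<k$;
 \item a state $g$, meaning the current block has passed its check and we are reading the free tail $\{1,2\}^*$;
 \item a failure state $z$.
\end{itemize}
The transitions are as follows. From $s$, the symbol $\#$ goes to $c_0$, and $1$ or $2$ goes to $z$. From $c_j$ with $j<k-1$, a symbol in $\{1,2\}$ goes to $c_{j+1}$, and $\#$ goes to $z$. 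From $c_{k-1}$, the symbol $1$ goes to $g$, while $2$ or $\#$ goes to $z$. From $g$, a symbol in $\{1,2\}$ stays at $g$, and $\#$ goes to $c_0$. The state $z$ loops on all of $\Gamma$. The only accepting state is $g$. The key point to verify is that every transition into $z$ is a permanent failure: no extension of that prefix lies in $\rev{G_k}$. This holds because each block begins with $\#$ and the first $k$ symbols after each $\#$ are forced to lie in $\{1,2\}$, with a $1$ in position $k$. Dually, $g$ is reached exactly after prefixes that are in $\rev{G_k}$.

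Next I would add the $@$-transitions and one global dead state $D$. Reading $@$ closes the current segment. This is allowed exactly when the segment is not in $\rev{G_k}$, that is, when the current state is $s$, some $c_j$, or $z$; note that the empty segment is fine because of the ${}^+$. From those states, $@$ goes back to $s$. From $g$, $@$ goes to $D$, and $D$ loops on all of $\Sigma$. The accepting set of the whole DFA is $\{s\}$, since $s$ is reached exactly on the empty word or right after an $@$ that closed a legal segment. The total is $k+4=O(k)$ states.

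The main obstacle is the correctness check. By induction on the input length, I would show that the state records, for the current unfinished segment, exactly which of these holds: it is empty, it is a proper prefix of a block at a given offset, it is in $\rev{G_k}$, or it has permanently left the prefix-closure of $\rev{G_k}$. I would also show that $D$ is entered exactly when some closed segment was in $\rev{G_k}$. The delicate part is the boundary behaviour. In $g$, reading $\#$ must restart the counter rather than fail. The counter states $c_j$ correspond to proper prefixes, which are not in $\rev{G_k}$, so $@$ is legal there. With this invariant, acceptance at $s$ matches the characterization of $\rev{C_k}$ given in the first paragraph.
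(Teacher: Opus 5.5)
Your construction is correct and is essentially the paper's proof. Your states $s,c_0,\dots,c_{k-1},g$ are exactly the partial DFA of Figure~\ref{fig:reversed-delimiter-dfa}, and $z$ is its completion sink. Your $@$-transitions, which return to $s$ from every state other than $g$ and send $g$ to a dead state, are the paper's ``check at each $@$ and restart'' gluing. Letting $s$ also serve as the paper's new initial accepting state is a harmless merge that yields the explicit count of $k+4$ states.
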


 \begin{proof}
  Complete and complement the DFA in
  Figure~\ref{fig:reversed-delimiter-dfa} to recognize
  $\Gamma^*\setminus\rev{G_k}$.
  Since $@\notin\Gamma$, recognizing
  $\bigl((\Gamma^*\setminus\rev{G_k})@\bigr)^*$ requires only a constant
  number of additional states: check each $\Gamma$-word when its terminating
  $@$ is read, and restart after each successful check.
  A new initial accepting state is used both at the beginning of the input
  and after each successful check, while a rejecting sink handles failed
  checks.
  Thus an $@$ returns to the new initial state exactly when the preceding
  $\Gamma$-word is outside $\rev{G_k}$, and otherwise enters the rejecting sink.
  All other states are nonaccepting.
  Thus $\rev{C_k}$ has an $O(k)$-state DFA.
 \end{proof}

 \begin{figure}[htbp]
  \centering
  \begin{tikzpicture}
   [
   >=Stealth,
   every state/.style={minimum size=8mm, inner sep=1pt},
   every edge/.style={draw,->},
   initial text={},
   scale=.84,
   transform shape
   ]
   \node[state, initial] (B) at (0,0) {};
   \node[state] (q1) at (1.45,0) {};
   \node (dots) at (2.7,0) {$\cdots$};
   \node[state] (qk) at (3.95,0) {};
   \node[state, accepting] (G) at (5.45,0) {};

   \path
   (B) edge node[above] {$\#$} (q1)
   (q1) edge node[above] {$1,2$} (dots)
   (dots) edge node[above] {$1,2$} (qk)
   (qk) edge node[above] {$1$} (G)
   (G) edge[loop above] node {$1,2$} (G)
       edge[bend right=42] node[above] {$\#$} (q1.north);
   \draw[decorate, decoration={brace,mirror,amplitude=4pt}]
   (1.45,-.55) -- (3.95,-.55)
   node[midway,below=5pt] {$(k-1)$ times};
  \end{tikzpicture}
  \caption{A partial DFA for $\rev{G_k}$.
  When $k=1$, the part under the brace consists of a single state.
  All undisplayed transitions are undefined.}
  \label{fig:reversed-delimiter-dfa}
 \end{figure}

 \subsection{Replacement and a common suffix}

 Fix $N\ge2$, and let $k$ be the least integer such that $2^k>N^2$.
 Thus $k=O(\log N)$.
 Put $Q=\{1,\ldots,N\}$, and let $\mathcal D_N$ be the set of all complete
 transition tables
 \[
 \delta_D:Q\times\Sigma\longrightarrow Q.
 \]
 We use the same notation for the extension of $\delta_D$ to words and to sets
 of states.
 The set $\mathcal D_N$ has size $N^{|\Sigma| N} = N^{O(N)}$.
 Any DFA with at most $N$ states can be padded with unreachable states, so it
 is enough to make the construction work for every $D\in\mathcal D_N$.

 \begin{lemma}[Universal replacement word]
 \label{lem:replacement-word}
 There is a nonempty word $z\in G_k$ of length $N^{O(N)}$ such that, for every
 $D\in\mathcal D_N$ and $p,q\in Q$, some $h\in\overline{G_k}$ with $|h| = |z|$ satisfies
 \[
 \begin{aligned}
 \delta_D(p,z)&=\delta_D(p,h),\\
 \delta_D(q,z)&=\delta_D(q,h).
 \end{aligned}
 \]
 \end{lemma}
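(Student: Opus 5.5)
The plan is to build $z$ as a concatenation of many short blocks, each in $\{1,2\}^*1\{1,2\}^{k-1}\#$, one block for each requirement $(D,p,q)$. The word $h$ will differ from $z$ only inside the block for the chosen requirement. Changing the symbol $k$ places before that block's $\#$ from $1$ to $2$ breaks membership in $G_k$. This is because $\#$ delimits the blocks uniquely: a word in $G_k$ decomposes into blocks, each ending at a $\#$, and each block must have $1$ as its $k$th symbol before its $\#$. So any $h$ obtained this way lies in $\overline{G_k}$ and has $|h|=|z|$.

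\emph{Single pair.} First I handle one table $D$ and one pair of states $(p',q')$. There are $2^k>N^2$ words $y\in\{1,2\}^k$ and only $N^2$ possible pairs $(\delta_D(p',y),\delta_D(q',y))$. By pigeonhole, two distinct words $y\ne y'$ of length $k$ induce the same pair. Let $j$ be a position where they differ, and name them so that $y_j=1$ and $y'_j=2$. Append an arbitrary word $R\in\{1,2\}^{j-1}$ followed by $\#$. Then $b=yR\#$ and $b'=y'R\#$ have length $k+j\le 2k$, and position $j$ is exactly the $k$th symbol before $\#$. Hence $b\in G_k$ and $b'$ is a single block outside $G_k$, with $\delta_D(p',b)=\delta_D(p',b')$ and $\delta_D(q',b)=\delta_D(q',b')$.

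\emph{All requirements.} Enumerate all triples $(D,p,q)$ with $D\in\mathcal D_N$ and $p,q\in Q$ as $t=1,\dots,T$, where $T=N^{O(N)}\cdot N^2=N^{O(N)}$. Build blocks $b_1,\dots,b_T$ sequentially. For triple $t=(D,p,q)$, let $P_t=b_1\cdots b_{t-1}$ and apply the single-pair step to $D$ and the states $\delta_D(p,P_t),\delta_D(q,P_t)$. This yields $b_t$ and its replacement $b'_t$. Set $z=b_1\cdots b_T\in G_k$; it is nonempty and has length at most $2kT=N^{O(N)}$.

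For triple $t$, take $h=b_1\cdots b_{t-1}\,b'_t\,b_{t+1}\cdots b_T$. This $h$ has the same length as $z$ and lies in $\overline{G_k}$ because its $t$th block fails. Starting from $p$, both words reach the same state after the common prefix. They agree again after block $t$ by the choice of $b_t,b'_t$, and then read the same suffix; the same holds for $q$. This gives both required equalities.

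The only real obstacle is universality: one $z$ must serve every $D$ and every pair simultaneously. The sequential construction handles this because each block is chosen relative to the states actually reached by the earlier prefix. Later blocks cannot destroy an earlier collision, since the suffix after block $t$ is identical in $z$ and $h$. The remaining checks are routine: the unique block decomposition, and the $1$-versus-$2$ placement at distance exactly $k$ from $\#$.
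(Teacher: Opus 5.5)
Your proposal is correct and takes essentially the same approach as the paper. Both proofs build $z$ sequentially over all triples $(D,p,q)$, apply pigeonhole on $\{1,2\}^k$ to the states reached after the current prefix, and pad so that the differing position becomes the $k$th symbol before $\#$; the paper pads with $2^{\ell_i-1}$, and your arbitrary $R$ works equally well. Both then obtain $h$ by swapping a single block, relying on the $\#$-delimited block decomposition and on the common suffix preserving the collision.
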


 \begin{proof}
 Enumerate the $m=|\mathcal D_N|N^2$ triples as
 \[
 (D_1,p_1,q_1),\ldots,(D_m,p_m,q_m).
 \]
 We recursively construct pairs $x_i,y_i$ of equal length and put
 $z_i=x_1\cdots x_i$, starting with $z_0=\varepsilon$.
 At stage $i$, the word $x_i$ will belong to $G_k$, while $y_i$
 will not, and replacing the final block $x_i$ in
 $z_i=z_{i-1}x_i$ by $y_i$ will preserve the action of the whole
 prefix on $p_i$ and $q_i$ in $D_i$.

 Suppose that $z_{i-1}$ has been constructed.
 Consider the $2^k$ pairs
 \[
 \bigl(\delta_{D_i}(p_i,z_{i-1}s),
       \delta_{D_i}(q_i,z_{i-1}s)\bigr),
 \qquad
 s\in\{1,2\}^k.
 \]
 There are only $N^2$ possible pairs, so two distinct words $s_i,t_i$ give the
 same pair.
 Choose a position $\ell_i$, numbered from the left, where they differ,
 interchanging them if necessary so that
 $(s_i)_{\ell_i}=1$ and $(t_i)_{\ell_i}=2$.
 Put
 \[
 x_i=s_i2^{\ell_i-1}\#,
 \qquad
 y_i=t_i2^{\ell_i-1}\#.
 \]
 Their $k$th symbols before the sole $\#$ are $1$ and $2$, respectively.
 Hence $x_i\in G_k$, $y_i\in\overline{G_k}$, and
 $|x_i|=|y_i|=O(k)$.
 Reading the common suffix $2^{\ell_i-1}\#$ preserves equality of the state
 pairs, so, with $z_i=z_{i-1}x_i$,
 \[
 \begin{aligned}
 \delta_{D_i}(p_i,z_i)
 &=\delta_{D_i}(p_i,z_{i-1}y_i),\\
 \delta_{D_i}(q_i,z_i)
 &=\delta_{D_i}(q_i,z_{i-1}y_i).
 \end{aligned}
 \]

 At the end, put $z=z_m$.
 For the triple $(D_j,p_j,q_j)$, take
 \[
 h=x_1\cdots x_{j-1}y_jx_{j+1}\cdots x_m.
 \]
 The $\#$ symbols force this block decomposition, so
 $z\in G_k$ and $h\in\overline{G_k}$, with $|h| = |z|$.
 The equalities obtained at stage $j$ persist after appending the common suffix
 $x_{j+1}\cdots x_m$, giving the required equalities for $z$ and $h$.
 Finally, there are $m=N^{O(N)}$ stages, each adding $O(k)$ symbols, so
 $|z|=N^{O(N)}$.
 \end{proof}

 Fix the word $z$ from Lemma~\ref{lem:replacement-word}, and put
 \[
 \mathcal R_z
 =\{\,@h:h\in\overline{G_k},\ |h|=|z|\,\}.
 \]
 Then $\mathcal R_z^*\subseteq C_k$.
 In particular,
 \[
 @1^{|z|}\in\mathcal R_z,
 \]
 because $1^{|z|}$ has no $\#$, so
 $1^{|z|}\in\overline{G_k}$.
 Every replacement $h$ supplied by Lemma~\ref{lem:replacement-word} satisfies
 $@h\in\mathcal R_z$.

 \begin{lemma}[A common suffix]
 \label{lem:stable-suffix}
 There is a word $w\in\mathcal R_z^*$, beginning with $@$ and of length
 $N^{O(N)}$, such that, for every $D\in\mathcal D_N$ and every
 $c\in\mathcal R_z^*$,
 \[
 |\delta_D(Q,w)|=|\delta_D(Q,wc)|.
 \]
 \end{lemma}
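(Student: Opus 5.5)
Since $|\delta_D(S,c)|\le|S|$ for every set $S$ and word $c$, the image size can only drop. So I will build $w$ greedily, one DFA at a time, until no continuation in $\mathcal R_z^*$ can shrink the image any further. Enumerate $\mathcal D_N=\{D_1,\ldots,D_M\}$ with $M=N^{O(N)}$. I start with $w_0=@1^{|z|}$, which lies in $\mathcal R_z$ and begins with $@$.

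\textbf{The greedy step.} Given $w_{j}\in\mathcal R_z^*$, look for some $D\in\mathcal D_N$ and some $c\in\mathcal R_z^*$ with
\[
|\delta_D(Q,w_jc)|<|\delta_D(Q,w_j)|.
\]
If a pair $(D,c)$ exists, set $w_{j+1}=w_jc$; otherwise stop and output $w=w_j$.

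\textbf{Termination.} For each $D$ the quantity $|\delta_D(Q,w_j)|$ is nonincreasing in $j$, because $w_{j+1}$ extends $w_j$. It lies between $1$ and $N$, and each step decreases at least one of them. Hence there are at most $(N-1)M$ steps. At termination, $|\delta_D(Q,w)|=|\delta_D(Q,wc)|$ for every $D$ and every $c\in\mathcal R_z^*$, which is exactly the claim. Membership $w\in\mathcal R_z^*$ and the leading $@$ are preserved because we only append words of $\mathcal R_z^*$.

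\textbf{Length bound.} This is the main obstacle: each appended $c$ must be short. Suppose the set $S=\delta_D(Q,w_j)$ can be shrunk by some $c\in\mathcal R_z^*$. Take the shortest such $c$, written as a product of blocks $r_1\cdots r_t$ with each $r_i\in\mathcal R_z$. Consider the successive images
\[
S_i=\delta_D(S,r_1\cdots r_i),\qquad i=0,\ldots,t.
\]
All of $S_0,\ldots,S_{t-1}$ have size $|S|$, by minimality of $c$. If two of them coincide, $S_a=S_b$ with $a<b\le t-1$, delete the blocks $r_{a+1}\cdots r_b$. This gives a shorter word in $\mathcal R_z^*$ that still shrinks $S$, since the image after the last block is unchanged. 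So the $S_i$ with $i<t$ are distinct subsets of $Q$, and $t\le 2^N$. Each block has length $|z|+1=N^{O(N)}$ by Lemma~\ref{lem:replacement-word}. Therefore $|c|\le 2^N N^{O(N)}=N^{O(N)}$.

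Combining the two bounds,
\[
|w|\le |z|+1+(N-1)M\cdot N^{O(N)}=N^{O(N)}.
\]
The key point is that closure of $\mathcal R_z^*$ under concatenation and deletion of whole blocks lets the pumping argument run inside $\mathcal R_z^*$.
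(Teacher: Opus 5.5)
Your proof is correct, and it uses the same key ingredient as the paper: image sizes can only shrink, and a shortest word of $\mathcal R_z^*$ with a prescribed effect on image size has at most $2^N$ blocks, since the blocks between two repeated images can be deleted. The organization is different, though.

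The paper does not iterate. For each $D$ it fixes a shortest $e_D\in\mathcal R_z^*$ attaining the \emph{global} minimum $\mu_D=\min_{c\in\mathcal R_z^*}|\delta_D(Q,c)|$. It then takes $w=@1^{|z|}\prod_D e_D$ in a single pass. Since $\delta_D(S,e_D)\subseteq\delta_D(Q,e_D)$ for every $S\subseteq Q$, whatever precedes $e_D$ cannot spoil it. Hence $|\delta_D(Q,w)|=\mu_D$, and no extension in $\mathcal R_z^*$ can go lower.

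Your greedy loop replaces this prefix-insensitivity observation with a potential argument: the total image size over all $D$ drops at each step. This costs up to $(N-1)|\mathcal D_N|$ appended words instead of $|\mathcal D_N|$, which is irrelevant at scale $N^{O(N)}$. In exchange, your termination condition is literally the statement of the lemma. The paper's construction is explicit and shows directly that $w$ attains the minimum rank $\mu_D$ for every $D$. Your $w$ does too, since $|\delta_D(Q,w)|=|\delta_D(Q,we_D)|\le\mu_D$.

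One small wording fix: your greedy step says ``some $c$,'' but the length bound relies on choosing the shortest shrinking $c$ for the chosen $D$, from the current image $\delta_D(Q,w_j)$. State that choice in the step itself.
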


 \begin{proof}
 For each $D\in\mathcal D_N$, first minimize
 $|\delta_D(Q,c)|$ over $c\in\mathcal R_z^*$.
 Among the words attaining this minimum, choose a shortest one and call it
 $e_D$.
 It is a concatenation of fewer than $2^N$ words from $\mathcal R_z$.
 Otherwise, the initial image $Q$ and the images after each of the first
 $2^N$ words in the concatenation would give $2^N+1$ subsets of $Q$.
 Two would be equal.
 If the same subset is reached after $i<j$ words, delete the portion of
 $e_D$ read between those two points.
 The remaining suffix then starts from the same subset as before, giving a
 shorter word with the same final image, a contradiction.

 Fix an ordering of $\mathcal D_N$, and put
 \[
 w=@1^{|z|}\prod_{D\in\mathcal D_N}e_D,
 \]
 where the product is taken in that order.
 Then $w\in\mathcal R_z^*$ and $w$ begins with $@$.

 Fix $D\in\mathcal D_N$ and write $w=a e_D b$.
 Since $\delta_D(Q,a)\subseteq Q$, and reading $b$ cannot increase the size
 of an image,
 \[
 |\delta_D(Q,w)|
 \le |\delta_D(\delta_D(Q,a),e_D)|
 \le |\delta_D(Q,e_D)|.
 \]
 Since $w\in\mathcal R_z^*$ and $e_D$ minimizes the image size, the reverse
 inequality also holds, and
 \[
 |\delta_D(Q,w)|=|\delta_D(Q,e_D)|.
 \]
 For any $c\in\mathcal R_z^*$, we have $wc\in\mathcal R_z^*$.
 By the choice of $e_D$,
 \[
 |\delta_D(Q,wc)|\ge |\delta_D(Q,e_D)|=|\delta_D(Q,w)|.
 \]
 Reading $c$ cannot increase the size of an image, so the reverse inequality
 also holds.
 Hence
 \[
 |\delta_D(Q,wc)|=|\delta_D(Q,w)|.
 \]

 Every word in $\mathcal R_z$ has length $|z|+1$, and every $e_D$ is a
 concatenation of fewer than $2^N$ such words.
 Hence
 \[
 |w|
 \le \bigl(1+2^N|\mathcal D_N|\bigr)(|z|+1)
 =N^{O(N)}.
 \]
 \end{proof}

 \subsection{Completing the construction}
 \begin{lemma}[A hard pair over the delimiter alphabet]
 \label{lem:delimiter-pair}
 There are equal-length words $u_N,v_N\in\Sigma^*$ of length $N^{O(N)}$
 such that
 \[
 \sepw(u_N,v_N)>N,
 \qquad
 u_N\in C_k,
 \qquad
 v_N\notin C_k.
 \]
 \end{lemma}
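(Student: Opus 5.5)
The plan is to make $u_N$ and $v_N$ act identically on every state of every $D\in\mathcal D_N$ by building both from powers of words that permute a stable image set. Let $w$ be the common suffix from Lemma~\ref{lem:stable-suffix}, and put
\[
P=\operatorname{lcm}(1,2,\ldots,N),\qquad
u_N=w\bigl(@1^{|z|}w\bigr)^P,\qquad
v_N=w\bigl(@z\,w\bigr)^P .
\]
These words have equal length because $|z|=|1^{|z|}|$. Their length is $(P+1)|w|+P(|z|+1)=N^{O(N)}$, since $P=e^{O(N)}$ and both $|w|$ and $|z|$ are $N^{O(N)}$.

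Membership is checked as follows. Since $w\in\mathcal R_z^*$ and $@1^{|z|}\in\mathcal R_z$, we get $u_N\in\mathcal R_z^*\subseteq C_k$. For $v_N$, note that $z\in G_k$ and that $w$ begins with $@$. Hence $v_N$ contains a factor $@z@$, and the unique decomposition into blocks $@h$ noted after the definition of $C_k$ gives $v_N\notin C_k$.

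For the separation bound, fix $D\in\mathcal D_N$ and write $S_D=\delta_D(Q,w)$. Every $c\in\mathcal R_z^*$ is injective on $S_D$. Indeed, $\delta_D(S_D,c)=\delta_D(Q,wc)$ has size $|S_D|$ by Lemma~\ref{lem:stable-suffix}. Moreover, every word ending in $w$ maps $Q$ into $S_D$, so $\rho=\delta_D(\cdot,@1^{|z|}w)$ restricts to a permutation of $S_D$.

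The key step, and the only place where $z$ is genuinely constrained, is to show that $\sigma=\delta_D(\cdot,@z\,w)$ is also a permutation of $S_D$, even though $@z\notin\mathcal R_z$. Suppose $p\ne q$ in $S_D$ had $\delta_D(p,@z)=\delta_D(q,@z)$. Apply Lemma~\ref{lem:replacement-word} to $D$ and the pair $\delta_D(p,@),\delta_D(q,@)$. It yields $h\in\overline{G_k}$ with $|h|=|z|$ such that $@h$ agrees with $@z$ on $p$ and $q$, so $@h$ merges $p$ and $q$. Since $@h\in\mathcal R_z$, this contradicts injectivity on $S_D$. Thus $@z$ is injective on $S_D$, and composing with $w$ (which maps into $S_D$) makes $\sigma$ an injective self-map of the finite set $S_D$, hence a permutation.

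Since $|S_D|\le N$, every permutation of $S_D$ has order dividing $P$, so $\rho^P=\sigma^P=\mathrm{id}_{S_D}$. Consequently, for every start state $q_0$, both words first reach the same state $\delta_D(q_0,w)\in S_D$ and then return to it:
\[
\delta_D(q_0,u_N)=\delta_D(q_0,w)=\delta_D(q_0,v_N).
\]
No DFA with at most $N$ states, after padding to a table in $\mathcal D_N$, can accept exactly one of $u_N,v_N$, so $\sepw(u_N,v_N)>N$. I expect the injectivity of $@z$ on $S_D$ to be the main obstacle: it is where Lemma~\ref{lem:replacement-word} must be invoked on the pair of states one step after reading $@$, not on $p,q$ themselves.
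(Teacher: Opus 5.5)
Your construction is the paper's: the same blocks $@1^{|z|}w$ and $@zw$, the same stable set $S_D=\delta_D(Q,w)$, and a power that turns every permutation of $S_D$ into the identity. Using $\operatorname{lcm}(1,\ldots,N)$ instead of the paper's $N!$ changes nothing. There is, however, a gap in the step you yourself flag as the key one.

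You prove that $@z$ is injective on $S_D$. You then assert that $\sigma=\delta_D(\cdot,@zw)$ is injective on $S_D$ because $w$ maps into $S_D$. That does not follow. The map $\sigma$ is the composite of $@z\colon S_D\to X$, where $X=\delta_D(S_D,@z)$, with $w\colon X\to S_D$. You have shown nothing about injectivity of $w$ on $X$. The set $X$ need not lie in $S_D$, and it need not have the form $\delta_D(Q,wc)$ with $c\in\mathcal R_z^*$, which is where the stable-suffix lemma gives injectivity. Moreover, $w$ itself generally collapses $Q$ onto the smaller set $S_D$, so it could merge two points of $X$ as far as your argument goes. The fact that $w$ lands in $S_D$ only makes $\sigma$ a self-map of $S_D$, not an injective one.

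The fix is to run your replacement argument on the whole word $@zw$ rather than on $@z$, which is what the paper does. For distinct $p,q\in S_D$, let $h$ be the word from Lemma~\ref{lem:replacement-word} for the states $\delta_D(p,@)$ and $\delta_D(q,@)$. Then $\delta_D(r,@z)=\delta_D(r,@h)$ for $r=p,q$. Since $w$ is a common suffix, $\delta_D(r,@zw)=\delta_D(r,@hw)$. Because $@hw\in\mathcal R_z^*$ is injective on $S_D$, the images of $p$ and $q$ under $@zw$ differ. With this change, the rest of your proof goes through as written: the membership checks, the length bound, and $\rho^P=\sigma^P=\mathrm{id}$ on $S_D$.
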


 \begin{proof}
 Take $z$ from Lemma~\ref{lem:replacement-word} and $w$ from
 Lemma~\ref{lem:stable-suffix}.
 Define the equal-length words
 \[
 g=@1^{|z|}w,
 \qquad
 b=@zw.
 \]
 Since $@1^{|z|}\in\mathcal R_z$ and $w\in\mathcal R_z^*$, we have
 $g\in\mathcal R_z^*$.

 Fix $D\in\mathcal D_N$ and put $P=\delta_D(Q,w)$.
 Since both $g$ and $b$ end in $w$, they map $P$ into itself:
 \[
 \delta_D(P,g)\subseteq P,
 \qquad
 \delta_D(P,b)\subseteq P.
 \]
 Since $g\in\mathcal R_z^*$, Lemma~\ref{lem:stable-suffix} gives
 \[
 |\delta_D(P,g)|
 =|\delta_D(Q,wg)|
 =|\delta_D(Q,w)|
 =|P|.
 \]
 Hence $\delta_D(P,g)=P$.

 We now prove the same equality for $b$.
 For any distinct states $p,q\in P$, apply
 Lemma~\ref{lem:replacement-word} to
 $\delta_D(p,@)$ and $\delta_D(q,@)$.
 It gives an $h\in\overline{G_k}$ with $|h|=|z|$ and the same action as $z$
 on both states.  Thus, for $r=p,q$,
 \[
 \delta_D(r,b)=\delta_D(r,@hw).
 \]
 Since $@h\in\mathcal R_z$, the word $@hw$ belongs to
 $\mathcal R_z^*$.
 Lemma~\ref{lem:stable-suffix} gives
 \[
 |\delta_D(P,@hw)|
 =|\delta_D(Q,w@hw)|
 =|P|.
 \]
 Thus, the distinct states $p$ and $q$ have distinct images under
 $@hw$, and hence also under $b$.
 As this holds for every distinct $p,q\in P$,
 $|\delta_D(P,b)|=|P|$.
 Together with $\delta_D(P,b)\subseteq P$, this gives
 $\delta_D(P,b)=P$,
 and both $g$ and $b$ act as permutations of $P$.

 Put
 \[
 u_N=wg^{N!},
 \qquad
 v_N=wb^{N!}.
 \]
 Since $|P|\le N$, both $g^{N!}$ and $b^{N!}$ act as the identity on $P$.
 After reading $w$, the state lies in $P$, so, for every $q\in Q$,
 \[
 \delta_D(q,u_N)=\delta_D(q,w)=\delta_D(q,v_N).
 \]
 Since $D$ was arbitrary, no DFA with at most $N$ states separates $u_N$
 and $v_N$.

 Since $w,g\in\mathcal R_z^*$, we have
 $u_N\in\mathcal R_z^*\subseteq C_k$.
 Since $w$ is also nonempty,
 $b=@zw$ contains the substring $@z@$; since
 $N!\ge1$, so does $v_N$.
 As $z\in G_k$, this gives $v_N\notin C_k$.
 Finally, $|g|=|b|$, so $u_N$ and $v_N$ have the same length.
 Since $N!\le N^N$ and $|z|,|w|=N^{O(N)}$, the length of $u_N$ and $v_N$ is $N^{O(N)}$.
 \end{proof}

 Lemma~\ref{lem:delimiter-pair} gives a pair over the four-letter alphabet
 $\Sigma$ at a construction-dependent length.
 The next two lemmas convert it to a binary pair while preserving the needed
 separation bounds, and then pad it to any larger prescribed length.

 \begin{lemma}[Reduction to binary words]
 \label{lem:palindromic-code}
 For every pair of distinct equal-length words $r,s\in\Sigma^*$, there are
 equal-length binary words $\widehat r,\widehat s$ of length $O(|r|)$ such
 that
 \[
 \sepw(r,s)\le\sepw(\widehat r,\widehat s)
 \]
 and
 \[
 \sepw(\rev{\widehat r},\rev{\widehat s})
 =O(\sepw(\rev{r},\rev{s})).
 \]
 \end{lemma}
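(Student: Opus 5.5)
Encode each letter of $\Sigma=\{1,2,\#,@\}$ by a fixed binary code $\varphi:\Sigma\to\{0,1\}^\ell$ of constant length $\ell$, and put $\widehat r=\varphi(r)$, $\widehat s=\varphi(s)$, extending $\varphi$ to a morphism. The lemma's title suggests the choice: take every codeword to be a palindrome, so that $\rev{\varphi(a)}=\varphi(a)$ and hence $\rev{\varphi(r)}=\varphi(\rev r)$. We also want the code to be synchronizing, so that a DFA can recover letter boundaries. Since both words are read from their start, the phase modulo $\ell$ is known anyway, and a plain block code with a position counter modulo $\ell$ suffices. So we only need four distinct palindromes of a common length $\ell$, for instance $\ell=3$: $000,010,101,111$. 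Then $|\widehat r|=3|r|$, and $\widehat r\ne\widehat s$ because $\varphi$ is injective on letters and uniform.

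For the first inequality, take any DFA $D'$ separating $\widehat r,\widehat s$ with $m$ states. Define a DFA $D$ over $\Sigma$ on the same state set with $\delta_D(q,a)=\delta_{D'}(q,\varphi(a))$, keeping the same initial and accepting states. Then $\delta_D(q_0,t)=\delta_{D'}(q_0,\varphi(t))$ for every $t\in\Sigma^*$, so $D$ separates $r,s$ with $m$ states. This gives $\sepw(r,s)\le\sepw(\widehat r,\widehat s)$.

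For the second inequality, note that $\rev{\widehat r}=\varphi(\rev r)$ by the palindrome property, and likewise for $s$. Given an $m$-state DFA $E$ over $\Sigma$ separating $\rev r,\rev s$, build a binary DFA that simulates $E$ while decoding blocks of length $3$. Its states are the states of $E$, together with intermediate states (state of $E$, bits read so far in the current block) for partial blocks of length $1$ and $2$, plus a sink. A completed block applies the corresponding letter transition of $E$, and any block outside the code goes to the sink. This uses $m(1+2+4)+1=O(m)$ states. On the inputs $\varphi(\rev r)$ and $\varphi(\rev s)$ it ends in the same state of $E$ that $E$ reaches, so with the same accepting set it separates them. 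This gives $\sepw(\rev{\widehat r},\rev{\widehat s})=O(\sepw(\rev r,\rev s))$.

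The main point requiring care is that reversal commutes with the encoding. This is exactly why palindromic codewords are used: with a non-palindromic code, the reversed binary word would decode to a different letter sequence. Everything else is routine simulation.
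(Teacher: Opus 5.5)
Your proposal is correct and follows essentially the same route as the paper. You use the same palindromic $3$-bit code ($000,010,101,111$) so that $\rev{\varphi(t)}=\varphi(\rev t)$, the same simulation of each letter by its codeword for $\sepw(r,s)\le\sepw(\widehat r,\widehat s)$, and the same block-decoding DFA with partial-block states and a rejecting sink, which gives $O(\sepw(\rev r,\rev s))$ states for the reversed pair.
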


 \begin{proof}
 As in Proposition~2 of~\cite{DemaineEtAl2011}, use the fixed-length
 palindromic binary code
 \[
 \begin{aligned}
 \gamma(@)&=000, &\gamma(1)&=010,\\
 \gamma(2)&=101, &\gamma(\#)&=111,
 \end{aligned}
 \]
 extended to words by concatenation, and put
 $\widehat r=\gamma(r)$ and $\widehat s=\gamma(s)$.
 These are binary words of length $3|r|$.
 Since every codeword is a palindrome,
 \[
 \rev{\gamma(t)}=\gamma(\rev{t})
 \]
 for every $t\in\Sigma^*$.

 Suppose that a binary DFA $E$ separates $\widehat r$ and $\widehat s$.
 For each $\sigma\in\Sigma$, define its transition to be the transition
 that $E$ makes on the three-bit word $\gamma(\sigma)$.
 With the same states, initial state, and accepting states as $E$, this
 gives a DFA over $\Sigma$ separating $r$ and $s$.
 Thus encoding cannot decrease separation.

 Conversely, let $D$ be a $q$-state DFA over $\Sigma$ that separates
 $\rev r$ and $\rev s$.
 A binary DFA reads the input in blocks of three bits.
 It records the current state of $D$ together with the one or two bits
 already read from the current block.
 When a complete block equals $\gamma(\sigma)$, it applies the
 $\sigma$-transition of $D$ and begins a new block.
 Any other three-bit block enters a rejecting sink.
 The accepting states are those reached at the end of a block when the
 corresponding state of $D$ is accepting.
 This uses $O(q)$ states and separates
 $\gamma(\rev r)=\rev{\widehat r}$ from
 $\gamma(\rev s)=\rev{\widehat s}$.
 \end{proof}

 \begin{lemma}[Padding to every target length]
 \label{lem:exact-length-padding}
 Let $x,y$ be distinct equal-length binary words.
 For every $L\ge|x|$, there are binary words $X,Y$ of length $L$ such that
 \[
 \sepw(X,Y)\ge\sepw(x,y)
 \]
 and
 \[
 \sepw(\rev X,\rev Y)
 \le\sepw(\rev x,\rev y)+1.
 \]
 \end{lemma}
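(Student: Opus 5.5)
The plan is to pad with a block of a single repeated letter, placed so that the forward words share a common suffix. The reversed words then share a common prefix, and one extra initial state can skip it. Put $m=L-|x|$. If $m=0$ take $X=x$, $Y=y$; so assume $m>0$. Let $x_n$ and $y_n$ be the last symbols of $x$ and $y$, where $n=|x|$. I split into two cases according to whether $x_n=y_n$.

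\emph{Case $x_n\ne y_n$.} Put $X=0^mx$ and $Y=0^my$. Separation always needs at least two states, since a one-state DFA accepts everything or nothing. The two-state DFA that remembers the last symbol read separates any two words whose last symbols differ. Hence $\sepw(x,y)=2=\sepw(X,Y)$. For the reversals, $\rev X$ and $\rev Y$ begin with different letters. A DFA with an initial state and two absorbing states, entered according to the first letter, separates them with $3$ states. Since $\sepw(\rev x,\rev y)\ge2$, this gives $\sepw(\rev X,\rev Y)\le3\le\sepw(\rev x,\rev y)+1$.

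\emph{Case $x_n=y_n=\sigma$.} Let $c=1-\sigma$, and put $X=xc^m$ and $Y=yc^m$.

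For the forward inequality, take any DFA $E$ separating $X$ and $Y$, with transition function $\delta$ and accepting set $F$. Keep the same states, transitions and initial state, and replace $F$ by $F'=\{q:\delta(q,c^m)\in F\}$. The new DFA accepts $x$ exactly when $E$ accepts $X$, and likewise for $y$ and $Y$. So it separates $x$ and $y$ with the same number of states, giving $\sepw(X,Y)\ge\sepw(x,y)$.

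For the reversed inequality, $\rev X=c^m\rev x$ and $\rev Y=c^m\rev y$, and both $\rev x$ and $\rev y$ begin with $\sigma\ne c$. Let $D$ be a minimum DFA separating $\rev x$ and $\rev y$, with initial state $q_0$. Build $D'$ by adding one new initial state $s$. On input $c$, state $s$ loops to itself. On input $\sigma$, state $s$ moves to $\delta_D(q_0,\sigma)$. All other transitions and the accepting states are those of $D$, and $s$ is nonaccepting. This is valid because $m>0$ and $|x|\ge1$, so neither reversed word ends while $D'$ is still in $s$. On $c^m\rev x$, the automaton $D'$ stays in $s$ through the padding. It then leaves $s$ on the first $\sigma$ exactly as $D$ leaves $q_0$, and afterwards follows $D$. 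Hence $D'$ accepts $\rev X$ if and only if $D$ accepts $\rev x$, and similarly for $Y$. This gives $\sepw(\rev X,\rev Y)\le\sepw(\rev x,\rev y)+1$.

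The main obstacle is the choice of padding placement and padding letter. If the padding is placed in front of $x$, the reversed separator would have to detect the end of the word. If the padding letter equals the first letter of $\rev x$, the extra state cannot tell where the padding ends. Both problems are avoided by putting the padding at the end of $x$ and $y$ and using the letter $1-\sigma$. The only case where that letter is not defined is when the last letters differ, and that case is handled separately above by the trivial two-state bound.
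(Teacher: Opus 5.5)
Your proof is correct and uses the same mechanism as the paper: append a common suffix, which cannot decrease forward separation, so that the reversed words share a common prefix that one new initial state skips. The only difference is that the paper appends $10^{d-1}$, whose leading $1$ marks the end of the padding for the new state looping on $0$; this makes your case split on the last letters, and the separate two-state argument for $x_n\ne y_n$, unnecessary.
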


 \begin{proof}
 Put $d=L-|x|$.
 If $d=0$, take $X=x$ and $Y=y$; both conclusions are immediate.
 Assume that $d>0$, and take
 \[
 X=x10^{d-1},
 \qquad
 Y=y10^{d-1}.
 \]
 Appending a common suffix cannot decrease deterministic separation:
 a DFA separating the extended words must already be in different states
 after reading $x$ and $y$, and those two states can be declared accepting and
 rejecting.
 Hence $\sepw(X,Y)\ge\sepw(x,y)$.

 Let $D$ be a smallest DFA separating $\rev x$ and $\rev y$.
 Add one new initial state that loops on $0$ and, on reading $1$, enters the
 old initial state of $D$.
 With all old transitions and accepting states unchanged, the resulting DFA
 separates
 \[
 \rev X=0^{d-1}1\rev x
 \qquad\text{and}\qquad
 \rev Y=0^{d-1}1\rev y.
 \]
 It has one more state than $D$.
 \end{proof}

 We can now prove Theorem~\ref{thm:reversal-gap}.

 \begin{proof}[Proof of Theorem~\ref{thm:reversal-gap}]
 Take $u_N,v_N$ from Lemma~\ref{lem:delimiter-pair}.
 The $O(k)$-state DFA for $\rev{C_k}$ from
 Lemma~\ref{lem:delimiter-automata} separates $\rev{u_N}$ and $\rev{v_N}$.
 Applying Lemma~\ref{lem:palindromic-code} gives equal-length binary words
 $x_N,y_N$ such that
 \[
 \sepw(x_N,y_N)>N,
 \qquad
 \sepw(\rev{x_N},\rev{y_N})=O(k).
 \]
 Since $|x_N|=N^{O(N)}$, there is an absolute constant $c_0$ such that
 \[
 |x_N|\le N^{c_0N}
 \]
 for every $N\ge2$.

 For any $L\ge N^{c_0N}$, apply
 Lemma~\ref{lem:exact-length-padding} to $x_N,y_N$.
 It gives binary words $x_{N,L},y_{N,L}$ of length $L$ with
 \[
 \sepw(x_{N,L},y_{N,L})>N
 \]
 and
 \[
 \sepw(\rev{x_{N,L}},\rev{y_{N,L}})
 =O(k)=O(\log N).
 \]
 \end{proof}

 \section{Conclusion}

 The results above give answers or refinements to the questions from
 Demaine et al.\ that motivated the paper.
 For structured deterministic pairs, their Hamming-distance method extends
 from differing positions to $d$ runs of differences, while the complexity of
 conjugate words depends on the arithmetic of the shift.
 Open Problem~2 is resolved by an unbounded difference and ratio between the
 two orientations of nondeterministic separation.
 The reversal construction addresses Ebrahimnejad's follow-up to Open Problem~1
 by giving an unbounded ratio between forward and reversed deterministic
 separation.
 Since nondeterministic separation is invariant under reversal, that same
 construction also improves the lower bound in Open Problem~3.

 Several questions remain open.
 The optimal worst-case bound in terms of the number of runs and the word length is unknown,
 as is the correct order of the separation complexity of conjugate words.
 The function $A(q)$ measuring the effect of exchanging the accepted and
 rejected words remains between quadratic and exponential.
 For Open Problem~3, determining the correct asymptotic order and reducing the
 length threshold in the reversal construction remain open.

 \section{Acknowledgments}

 We thank Jeffrey Shallit and Zachary Chase for helpful discussions regarding some of these problems.

\section{Declaration of generative AI and AI-assisted technologies in the manuscript preparation process}
During the preparation of this work the author used ChatGPT 5.6 Pro
for editing and review, and to generate initial proofs for
Theorem~\ref{thm:nsep-asymmetry} and an earlier, weaker version of
 Theorem~\ref{thm:reversal-gap}.
The author reviewed and checked all arguments,
 substantially revised and simplified the proofs,
 and takes full responsibility for the content of the article.

 \bibliographystyle{elsarticle-num}
 \bibliography{references}

\end{document}